\def\openone{\leavevmode\hbox{\small1\kern-3.8pt\normalsize1}}
\def\II{\mathbb I}
\def\cg{{\cal G}}
\def\RR{\mathbb{R}}
\def\11{\mathbb{I}}
\newtheorem{definition}{Definition}[section]
\newtheorem{lemma}[definition]{Lemma}
\newtheorem{theorem}[definition]{Theorem}
\newcommand{\outerp}[2]{\ket{#1}\!\bra{#2}}
\newcommand\norm[1]{\lVert#1\rVert_1}
\newcommand\normlarge[1]{\left\lVert #1 \right\rVert_1}
\newcommand\pnorm[2]{\lVert#1\rVert_#2}
\newcommand\pnormp[2]{\left\lVert#1\right\rVert_#2^#2}
\newcommand{\supp}{\mathop{\rm supp}\nolimits}
\newcommand{\bra}[1]{\langle#1|}
\newcommand{\ket}[1]{|#1\rangle}
\newcommand{\cD}{{\cal D}}
\newcommand{\cG}{{\cal G}}
\newcommand{\cH}{{\cal H}}
\newcommand{\cK}{{\cal K}}
\newcommand{\cP}{\mathcal{P}}
\newcommand{\cZ}{{\cal Z}}
\DeclareMathOperator{\Tr}{Tr}
\newcommand{\Renyi}{R{\'e}nyi~}
\def\accept{{\Omega_{\textnormal{acc}}}}
\def\AT{{\Omega_{\textnormal{AT}}}}
\def\EV{{\Omega_{\textnormal{EV}}}}
\begin{document}

\title{Generalized Numerical Framework for Improved Finite-Sized Key Rates with \Renyi Entropy}

\author{Rebecca R.B. Chung}
\affiliation{School of Physical and Mathematical Sciences, Nanyang Technological University, Singapore 637371}
\author{Nelly H.Y. Ng}%
\email{nelly.ng@ntu.edu.sg}
\affiliation{%
 School of Physical and Mathematical Sciences, Nanyang Technological University, Singapore 637371}%
\affiliation{Centre for Quantum Technologies, National University of Singapore, 3 Science Drive 2, 117543, Singapore}
\author{Yu Cai}
\affiliation{%
School of Physical and Mathematical Sciences, Nanyang Technological University, Singapore 637371} %

\date{\today}
\begin{abstract}
Quantum key distribution requires tight and reliable bounds on the secret key rate to ensure robust security. This is particularly so for the regime of finite block sizes, where the optimization of generalized \Renyi entropic quantities is known to provide tighter bounds on the key rate. However, such an optimization is often non-trivial.
In this work, we present an analytical bound on the \Renyi entropy in terms of the \Renyi divergence and derive the analytical gradient of the \Renyi divergence. This enables us to generalize existing state-of-the-art numerical frameworks for the optimization of key rates. With this generalized framework, we show improvements in regimes of high loss and low block sizes, which are particularly relevant for long-distance satellite-based protocols. 
\end{abstract}

\maketitle

\section{\label{sec:Introduction} Introduction}

Quantum key distribution (QKD) is a cryptographic method for establishing secret keys between two distant parties, commonly called Alice and Bob, by exchanging quantum signals through a public quantum channel and using classical communication over an authenticated public channel~\cite{BENNETT20147,ekert1991quantum}. An adversary, referred to as Eve, can fully access the classical communication and potentially interfere with the quantum channel. Since the introduction of the first QKD protocol~\cite{BENNETT20147}, QKD has evolved into various forms~\cite{wolf2021quantum,diamanti2016practical, xu2020secure, pirandola2020advances} to enhance performance, simplify implementation, and address vulnerabilities. 

A key performance metric for a QKD protocol is the secret key rate, defined as the ratio between the number of secret key bits and the number of raw quantum bits exchanged during the protocol. The secret key rate is determined by the full set of parameters governing the protocol such as noise and loss of the quantum channel, and assumptions on the attacking power of Eve. A central ingredient in formulating the key rate is the leftover hashing lemma~\cite{renner2008security, tomamichel2011leftover} which quantifies the number of secret bits distillable from privacy amplification requiring Eve has no knowledge of the final key. Under the assumption of independently and identically distributed (i.i.d.) signals, one would be able to obtain a bound on the smooth min-entropy via the von Neumann entropy \cite{tomamichel2012framework}. The bound saturates in the asymptotic limit where the number of signals are infinite, due to the asymptotic equipartition property~\cite{tomamichel2012framework}. Early works ~\cite{devetak2005distillation, berta2010uncertainty} subsequently formally established the security for collective attacks and key rate formulas in the asymptotic regime. 
The security against general attacks are also established via the entropy accumulation theorem (EAT) \cite{dupuis2020entropy,george2022finite,metger2023security,metger2024generalised} and the postselection technique \cite{christandl2009postselection}. 

Since imperfections in sources and detectors in a practical QKD setup are inevitable, it quickly became evident that robust security proofs should be derived assuming that Eve may fully exploit such imperfections to her advantage~\cite{gottesman2004security,scarani2009practical}. In essence, a practical QKD security proof necessitates further numerical optimizations based on the experimental parameters estimated by the honest parties. This optimization considers the worst-case scenario where Eve exploits device imperfections to perform an optimal attack. Winick et al. \cite{winick2018reliable} developed a reliable numerical framework to evaluate the von Neumann key rate under collective attack through a two-step approach involving semi-definite programming (SDP). George et al. \cite{george2021numerical} subsequently extended the framework by incorporating finite size analysis, rendering it a highly valuable tool for evaluating the key rates of QKD implementations. 

In a recent work by Dupuis ~\cite{dupuis2023privacy}, it was shown that the key rate can be formulated directly in terms of the generalized \Renyi entropies via a \Renyi leftover hashing lemma. This new lemma eliminates the need for smoothing which is a challenging optimization problem over quantum states, and avoids loose bounds associated with the von Neumann entropy in the finite-size regime. By optimizing over a single \Renyi parameter $\alpha$, this approach is expected to yield tighter bounds in the finite-size regime. Recent works~\cite{tupkary2024security,arqand2024generalized, kamin2024finite,nahar2024postselection} have incorporated the \Renyi leftover hashing lemma into key rate analysis. 
However, the bounds on the \Renyi key rate remain loose, and numerical frameworks for the \Renyi entropy key rate evaluations remain undeveloped due to the inherent challenges in optimizing the \Renyi entropy. 

In this work, we address these challenges by generalizing the numerical framework initially proposed by Winick et al. \cite{winick2018reliable} to include the optimization of the generalized \Renyi key rate. We derive a bound to the \Renyi entropy in terms of the \Renyi divergence. Such bound is relatively tighter compared to older works \cite{renner2008security,scarani2008finite,scarani2009practical,george2021numerical} where the von Neumann entropy is used to analyze finite-size key rates. We subsequently obtain an analytical expression for the gradient of the \Renyi divergence. These inputs are then utilized in the Frank-Wolfe algorithm \cite{frank1956algorithm} as part of the minimization algorithm. Additionally, we incorporate the finite-size framework originally introduced by George et al. \cite{george2021numerical} into our analysis for collective attacks and using the open-source numerical package (openQKDsecurity)~\cite{burniston_2024_14262569}. Consequently, we achieve an improvement in key rate, which we demonstrate to be pertinent in practical scenarios, including in the presence of channel loss and noise.

This manuscript is organized as such: Section \ref{sec: Background} provides the background of the basic QKD protocol framework, and 
in Section \ref{sec: technical formulation} we introduce our technical results and the formulation of the generalized framework. In Section \ref{sec: numerical analysis} we conduct a numerical analysis using our generalized framework. Section \ref{sec: discussion} concludes with a discussion and outlook for future work.

\section{\label{sec: Background} QKD Protocol Framework}
We begin by providing a brief overview of the basic QKD protocol framework and its mathematical formulations used for the security analysis. Below is a summary of the steps for a generic protocol.
\begin{enumerate}[leftmargin=*]
\item For each round of the protocol $i\in\{0,1,\dots N\}$, 
\begin{enumerate}[topsep=3pt,itemsep=0ex,leftmargin=15pt]
    \item \textbf{State preparation and transmission:} 
    In the entanglement-based (EB) setting, a source prepares an entangled state and sends it to two trusted parties Alice and Bob via a public quantum channel. In the prepare-and-measure (PM) setting, the source is located in Alice's lab and the state is modeled by the source-replacement scheme \cite{bennett1992quantum, curty2004entanglement,ferenczi2012symmetries} where Alice prepares a bipartite entangled state $\ket \Phi_{AA'} =\sum_x \sqrt{p_x} \ket x_A \otimes \ket {\varphi_x}_{A'}$, keeps $A$ and sends $A'$ to Bob. The system $A'$ goes through a quantum channel $A'\rightarrow B$ before reaching Bob. 
    \item \textbf{State measurement:} Alice measures $A$ using a set of POVM and stores her measurement result in a classical register $X_A$. Bob measures $B$ using another set of POVM and stores his measurement result in a classical register $Y_B$. They subsequently store their choice of POVM in a public register.
    \item \textbf{Public announcements and sifting:} Alice and Bob exchange classical information from their public registers. The classical communication script is stored in a classical register $S$ for sifting. During this step, Eve also obtains a copy of $S$ and stores at $E_S$. Alice and Bob subsequently either keep or discard their measurement results depending on $S$. 
\end{enumerate}
\item \textbf{Parameter estimation}: Alice and Bob randomly choose a small proportion of signals to construct a frequency distribution, which is used to estimate errors on the quantum channel, while the remainder of the signals are used for key distillation. The protocol is accepted if the frequency distribution falls within a set of previously agreed statistics, and is aborted otherwise; we denote this event as $\Omega_{\rm AT}$. The remaining number of signals after parameter estimation is denoted as $n$, which is the key distillation rounds.
\item \textbf{Key Map: } Alice subsequently applies a key map procedure on $A$ according to $X_A$ and maps it into an intermediary key register $R$. (In the reverse-reconciliation scheme, Bob applies key map instead.) She subsequently applies an isometry on $R$ and stores it in the final raw key register $Z_A$.
\item \textbf{Error correction and verification:} Alice and Bob exchange $\lambda_{EC}$ bits of classical information so that Bob can perform error correction on his string and produce a guess of Alice's raw key. To verify the correctness of this step, Alice sends a 2-universal hash of her key string of length $\log_2(1/\epsilon_\textnormal{EV})$ to Bob for error verification. Bob subsequently compares Alice's hash with the hash of his guess. The protocol is accepted if the hash matches; we denote this event as $\Omega_{\rm EV}$, and $\epsilon_\textnormal{EV}$ as the failure probability of error verification.
\item \textbf{Privacy amplification:} Finally, Alice and Bob randomly pick a 2-universal hash function and apply it to their raw keys, producing the final secret keys $K_A$ and $K_B$ respectively. $\epsilon_\textnormal{PA}$ is denoted as the failure probability of privacy amplification.
\end{enumerate}

Upon completion of the protocol, $\Pr[\accept] = \Pr[\EV \wedge \AT] $ is the probability where the protocol is accepted ($\Pr[\bot]$ when aborted); in other words, $\accept$ refers to the events where both parameter estimation and error verification succeed. 
With this, we formally establish the security of the protocol via the following statement. 

\begin{definition}\label{def: Security of QKD} A key distillation protocol with $n$ key distillation rounds is said to be $\epsilon_{\rm sec}$-secret if
    \begin{equation}
        \frac{1}{2}\Pr[\accept]\normlarge{\rho_{K_A\tilde CE_n'|\accept} -\bar{\rho}_{K_A}\otimes\rho_{\tilde C E_n'|\accept}}\leq \epsilon_\textnormal{sec}.
    \end{equation}
    where $\tilde C$ is a classical register storing the error correction and verification scripts, and $E_n'=E_nE_{S^n}$ is Eve's total system containing her quantum side information $E_n$ and a copy of the public information $E_{S^n}$ over all the rounds, and $\bar \rho_{K_A}$ is a maximally-mixed state.\\
    Additionally, the protocol is $\epsilon_{\rm corr}$-correct if
    \begin{equation}
        \Pr[K_A \neq K_B \wedge \accept] \leq \epsilon_\textnormal{corr}.
    \end{equation}
Such a protocol is $\epsilon$-secure when $\epsilon_\textnormal{sec}+\epsilon_\textnormal{corr}=\epsilon$.
\end{definition}

It is worth noting that for general attacks by Eve over all $n$ rounds, we need to quantify security by assuming that Eve has access to the full system of $E_n'=E_nE_{S^n}$. However, when considering i.i.d. collective attacks, it suffices to only consider attacks on a single-round, where Eve's single-round system is simply denoted as $E'=EE_S$.

\section{\label{sec: technical formulation} Numerical Framework for Key Length Optimization}
\subsection{Key Length formulation \label{subsec: key length formulation}}
We first formulate the secret key length using the recently established leftover hashing lemma for \Renyi entropies \cite{dupuis2023privacy}, which quantifies the amount of secret bits obtained from the 2-universal hash function during privacy amplification. 
\begin{theorem}\label{thm: PA without smoothing}
\textnormal{Privacy Amplification (\cite{dupuis2023privacy}, Theorem 8)}. For $\alpha \in (1,2]$, we have $F:Z_A\rightarrow K_A$ as a family of two-universal hash function with $K_A=\{0,1\}^l$. If the hash function is drawn uniformly from $F$, then
\begin{equation}
\begin{aligned}
    &\normlarge{\rho_{K_A\tilde C E_n'|\accept}-\frac{\II_{K_A}}{|\cK_A|}\otimes\rho_{\tilde CE_n'|\accept}} \leq 2^{g(l,\alpha,\rho)},\\
    &g(l,\alpha,\rho) = \frac{2}{\alpha}-1+\frac{1-\alpha}{\alpha}\left[ H_\alpha(Z_A^n|\tilde{C}E_n')_{\rho|\accept} - l\right],
\end{aligned}
\end{equation}
where $l=\log_2(|\cK_A|)$ is the final length of the secret keys after hashing and $H_\alpha$ denotes the conditional Rényi sandwiched entropy.
\end{theorem}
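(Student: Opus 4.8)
The plan is to avoid bounding the trace norm directly and instead route through the sandwiched \Renyi divergence $D_\alpha$, which is the natural object for the chosen entropy $H_\alpha$ and, crucially, behaves well under the hashing average. Writing the ideal target as $\II_{K_A}/|\cK_A|\otimes\rho_{\tilde C E_n'|\accept}$, I would first invoke a H\"older-type norm inequality, valid precisely for $\alpha\in(1,2]$, of the schematic form $\normlarge{\rho-\sigma}\le 2^{\frac{2}{\alpha}-1}\,2^{\frac{\alpha-1}{\alpha}D_\alpha(\rho\|\sigma)}$, applied with $\rho$ the post-hashing state and $\sigma$ the decoupled target. This single step already produces the additive constant $\tfrac{2}{\alpha}-1$ in $g$ (note it vanishes at $\alpha=2$, recovering the familiar collision/two-norm bound), and reduces the problem to controlling $D_\alpha$ of the hashed state against the decoupled target.

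Second, I would expand the divergence using $\II_{K_A}/|\cK_A|=2^{-l}\,\II_{K_A}$: since rescaling the second argument by $2^{-l}$ shifts the divergence by $+l$, the factor $|\cK_A|=2^l$ is exactly what contributes the $-l$ inside the bracket of $g$. It then remains to bound the hash-averaged divergence $\EE_f\,D_\alpha(\rho^f\|\II_{K_A}\otimes\rho_{\tilde CE_n'})$ by $-H_\alpha(Z_A^n|\tilde C E_n')_{\rho|\accept}$. Writing the pre-hash classical-quantum state as $\sum_z p_z\proj{z}\otimes\rho^z$, the hash sends $\proj z\mapsto\proj{f(z)}$, so the $\alpha$-quantity $Q_\alpha=\tr[(\sigma^{\frac{1-\alpha}{2\alpha}}\rho^f\sigma^{\frac{1-\alpha}{2\alpha}})^\alpha]$ collects, in each output block $k$, all source symbols with $f(z)=k$. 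Pulling $\EE_f$ through $\log$ by concavity and then through $Q_\alpha$ by the operator convexity of $t\mapsto t^\alpha$ on $[1,2]$, the defining two-universal property $\PP_f[f(z)=f(z')]\le 2^{-l}$ bounds the cross terms and collapses the averaged $Q_\alpha$ to the pre-hash quantity $\tr[(\sigma^{\frac{1-\alpha}{2\alpha}}\rho_{Z_A^nE_n'}\sigma^{\frac{1-\alpha}{2\alpha}})^\alpha]$, which equals $2^{(\alpha-1)(-H_\alpha(Z_A^n|\tilde C E_n'))}$ after optimizing the reference $\sigma$ on $E_n'$.

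Third, collecting exponents gives $\tfrac{\alpha-1}{\alpha}D_\alpha(\rho^f\|\II_{K_A}/|\cK_A|\otimes\rho)\le\tfrac{\alpha-1}{\alpha}(l-H_\alpha)=\tfrac{1-\alpha}{\alpha}(H_\alpha-l)$, which together with the $\tfrac{2}{\alpha}-1$ overhead reproduces $g(l,\alpha,\rho)$ exactly. Throughout I would keep every state labelled by the conditioning event $\accept$ and treat $\tilde C$ as an additional classical side register that is simply carried along in the reference-state optimization, so that the final bound holds for the conditioned state as stated.

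The main obstacle is the non-commutativity hidden in the sandwiching $\rho_E^{\frac{1-\alpha}{2\alpha}}(\cdot)\rho_E^{\frac{1-\alpha}{2\alpha}}$: because $\rho^z$ and the reference do not commute, the collision bound from two-universality cannot be applied term by term as in the classical case, and the $\alpha$-power of a sum of non-commuting positive operators must be controlled by operator Minkowski and operator-convexity inequalities. The restriction $\alpha\in(1,2]$ is exactly the regime in which $t\mapsto t^\alpha$ is operator convex, which is what licenses interchanging $\EE_f$ with the $\alpha$-power; pushing beyond $\alpha=2$ would break this step. Pinning down the precise constant $\tfrac{2}{\alpha}-1$ (rather than a loose $O(1)$ factor) in the opening norm inequality is the other delicate point, since any slack there directly loosens the finite-size key rate.
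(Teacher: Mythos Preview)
The paper does not prove this theorem at all: it is simply quoted from \cite{dupuis2023privacy} (their Theorem~8) and used as a black box in the key-length derivation of Section~\ref{subsec: key length formulation}. There is therefore no ``paper's own proof'' to compare your proposal against.

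That said, your sketch is a faithful outline of the argument in the cited reference: the H\"older-type bound $\norm{\rho-\sigma}\le 2^{\frac{2}{\alpha}-1}\cdot 2^{\frac{\alpha-1}{\alpha}D_\alpha(\rho\|\sigma)}$ followed by the scaling identity for $D_\alpha$ under $\sigma\mapsto 2^{-l}\sigma$, and then the two-universal average controlled via operator convexity of $t\mapsto t^\alpha$ on $\alpha\in(1,2]$, is exactly the route Dupuis takes. One point to be careful about is the order of operations in your second step: you cannot first pull $\EE_f$ through $\log$ by concavity and \emph{then} through $Q_\alpha$ by operator convexity, since concavity of $\log$ gives an inequality in the wrong direction for an upper bound on $D_\alpha$ when $\alpha>1$ (the prefactor $\tfrac{1}{\alpha-1}$ is positive). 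The actual argument bounds the expected $Q_\alpha$ directly (without the logarithm) and only afterwards converts back to $D_\alpha$; the $\log$ is taken of a single averaged quantity, not averaged. This is a genuine ordering subtlety that your write-up glosses over, but it is easily repaired and does not affect the final exponent.
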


A central technical quantity in Theorem \ref{thm: PA without smoothing} is the conditional \Renyi sandwiched entropy, which is defined as a special case of the sandwiched \Renyi divergence. 
\begin{definition}\label{def:entropy}
    (Sandwiched \Renyi divergence and conditional entropy,  \cite{tomamichel2015quantum}) For subnormalized quantum states $\rho,\sigma $ where $ \supp(\rho) \subseteq \supp(\sigma)$ and real values of $\alpha \in (0,\infty)$, the sandwiched \Renyi divergence as
    \begin{equation}
        D_\alpha(\rho||\sigma) := 
            \frac{1}{\alpha-1}\log_2\frac{\pnormp{\sigma^{\frac{1-\alpha}{2\alpha}}\rho\sigma^{\frac{1-\alpha}{2\alpha}}}{\alpha}}{\Tr(\rho)} ,  
    \end{equation}
where $\pnorm{\rho}{p}=[\Tr(\rho^p)]^{1/p}$ is the Schatten p-norm.
For states $\rho_{AB}$, the corresponding conditional \Renyi entropy is subsequently defined as
\begin{align}
H_\alpha (A|B)_\rho &:= -\inf_{\sigma_B}D_\alpha(\rho_{AB}||\II_A \otimes \sigma_B). 
\end{align}
\label{eq: conditional sandwiched renyi with optimization}
\end{definition}
\noindent The conditional \Renyi entropy is monotonically non-increasing in $\alpha$, and satisfies data processing inequality \cite{beigi2013sandwiched}. It also reduces to the von Neumann entropy at $\alpha\rightarrow 1$, and the conditional min-entropy when $\alpha \rightarrow \infty$. It thus recovers properly the conventional quantum entropic measures in a well-behaved manner, giving us a valuable information-theoretic measure to capture non i.i.d. features of general quantum states. 

Following from Lemma 9 of Ref \cite{tupkary2024security}, the \Renyi entropy can then be lower-bounded as shown 
\begin{equation}
H_\alpha(Z^n_A|\tilde{C}E_n')_{\rho|\accept} \geq H_\alpha (Z_A^n| \tilde{C}E_n')_{\rho} - f(\alpha)   
\label{eq: Decrease accept}
\end{equation}
with a correction term $f(\alpha) = \frac{\alpha}{1-\alpha}\log_2\left(\Pr[\accept]\right)$. Here, we let $\tilde C = CV$, where $C$ is the register containing error correction scripts and $V$ is the register containing error verification scripts. Subsequently, the \Renyi entropy is further bounded by \cite{tomamichel2015quantum} 
\begin{equation}
    \begin{aligned}
        H_\alpha &(Z_A^n| \tilde{C}E_n')_{\rho}\geq H_\alpha (Z_A^n| CE_n')_{\rho}-\log_2\left(\frac{1}{\epsilon_{\textnormal{EV}}}\right).
    \end{aligned}
    \label{eq: Decrease EV}
\end{equation}
Finally, to account for key bits used for error correction, the \Renyi entropy decreases due to information leakage, 
    \begin{equation}
        H_\alpha(Z_A^n|CE_n')_\rho \geq H_\alpha(Z_A^n|E_n')_\rho - \lambda_{EC},
        \label{eq: Decrease EC}
    \end{equation}
where $\lambda_{EC}=nf_{EC}H(Z_A|Y_B)$ is the amount of bits leaked during public communication for error correction, and $f_{EC}$ is the error correction efficiency.
 
Now we are ready to formulate the objective function for our optimization problem, where the goal is to optimize $H_\alpha(Z_A^n|E_n')_\rho$. Starting from the initial state $\rho_{AB}$ shared between Alice and Bob, we first define a postprocessing map $\cG(\rho_{AB})=\sum_s G_s \rho_{AB} G_s^\dagger$ according to Ref \cite{lin2019asymptotic} that describes the state measurement, public announcements and key map procedures, where
\begin{equation}\label{eq:G_s}
    G_s = \sum_x \ket {g(x,s)}_R \otimes  M_A^{(x,s)} \otimes \II_B \otimes p_s\ket {s}_S.
\end{equation}
Here, $M_A^{(x,s)}$ is Alice's choice of POVM and $p_s$ is the probability of a particular choice of measurement $s$. The function $g(x,s)$ maps the values $x$ and $s$ into an intermediary key register $R$. We thus have $\cG(\rho_{AB}) = \rho_{RABS}$. 
We then proceed to formulate our objective function by first obtaining a bound to $H_\alpha(Z_A^n|E_n')_\rho$. Under the i.i.d. collective attack scenario, the \Renyi entropic bound for each round is the same, thus we can express the Rényi entropy in terms of the single-round bound as
\begin{equation}
\begin{aligned}
    H_\alpha(Z_A^n|E_n')_\rho = p_\textnormal{gen}*N*\Tr(\cG(\rho_{AB}))*H_\alpha(Z_A|E')_\rho,
\end{aligned}
\label{eq: iid renyi entropy}
\end{equation}
where $p_\textnormal{gen}$ is the proportion of signal rounds used for key generation during parameter estimation, $N$ is the total number of rounds of the protocol, and $\Tr(\cG(\rho_{AB}))$ is the probability that the signal is accepted during the sifting process for each round. Next, we formulate the single-round \Renyi entropy via the following theorem.

\begin{theorem}
\label{thm: duality renyi divergence}
    Let $\rho_{Z_ARABSE'}$ be pure. Then, for $\gamma = (2-\alpha^{-1})^{-1}$ and $\beta=\alpha^{-1}$, the Rényi entropy for a single round for $\alpha \in (1,2]$ can be expressed as
    \begin{align}
    H_\alpha(Z_A|E') &=\inf_{\sigma_{RABS}}D_\gamma(\cG(\rho_{AB})||\cZ(\sigma_{RABS})),\label{eq:D_gamma} \\
    &\geq D_\beta\bigl(\cG(\rho_{AB})||\cZ\circ\cG(\rho_{AB})\bigr) ,\label{eq: divergence bound}
    \end{align}
    Here, $\cZ(\cdot)=\sum_i(Z^i_R\otimes \II_{ABS})(\cdot)(Z^i_R\otimes \II_{ABS})$ is a pinching map with rank-one projectors $Z_R^i$ in the $Z$ basis acting on $R$. 
\end{theorem}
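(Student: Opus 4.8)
The plan is to reduce the single-round quantity $H_\alpha(Z_A|E')$ to a Rényi divergence by combining the duality of the sandwiched conditional entropy with the classical-copy structure of the key map. First I would use that $\rho_{Z_ARABSE'}$ is pure: by the duality relation for the sandwiched conditional Rényi entropy \cite{tomamichel2015quantum,beigi2013sandwiched}, and since the complement of $Z_AE'$ inside the pure state is $RABS$, one has $H_\alpha(Z_A|E') = -H_\gamma(Z_A|RABS)$ with $\gamma$ the conjugate order $\tfrac1\alpha+\tfrac1\gamma=2$, i.e. $\gamma=(2-\alpha^{-1})^{-1}$. Expanding the definition of the negated conditional entropy then gives $H_\alpha(Z_A|E') = \inf_{\tau_{RABS}} D_\gamma(\rho_{Z_ARABS}||\,\II_{Z_A}\otimes\tau_{RABS})$, which is the object I must massage into the stated form.

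The remaining work for \eqref{eq:D_gamma} is bookkeeping that introduces $\cG$ and $\cZ$. The register $Z_A$ is an isometric copy of $R$ in the $Z$-basis, so $\rho_{Z_ARABS}=W\,\cG(\rho_{AB})\,W^\dagger$ for the copy isometry $W=\sum_i\ket{i}_{Z_A}\otimes(Z_R^i\otimes\II_{ABS})$ (here the rank-one projectors $Z_R^i$ are used). Isometric invariance of $D_\gamma$ removes $Z_A$ and restores the \emph{coherent} first argument $\cG(\rho_{AB})$, provided the second argument is block-diagonal in the $Z$-basis on $R$; that the optimal $\tau$ may be taken block-diagonal I would get from data processing under the channel projecting onto the locked $Z_A$–$R$ subspace (a bare pinching $\cZ$ on $R$ alone does not fix the first argument, so this projection is the correct map to use). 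Writing such a $\tau$ as $\cZ\circ\cG(\sigma_{AB})$ yields $\inf_{\sigma_{AB}} D_\gamma(\cG(\rho_{AB})||\cZ\circ\cG(\sigma_{AB}))$. I expect the genuine obstacle to be justifying that the optimization over \emph{arbitrary} purifying states $\tau_{RABS}$ can be restricted without loss to the structured family $\cZ\circ\cG(\sigma_{AB})$; this needs the support structure of $\cG$ and a careful support/pinching argument. As a consistency check, at $\alpha\to1$ all orders tend to $1$ and the functional collapses to the von Neumann key rate $D(\cG(\rho_{AB})||\cZ\circ\cG(\rho_{AB}))$ of Winick et al.\ \cite{winick2018reliable}.

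For the lower bound \eqref{eq: divergence bound} I would \emph{not} use the $\gamma$-expression but return to the non-optimized definition. Choosing the feasible but sub-optimal $\sigma_{E'}=\rho_{E'}$ in $H_\alpha(Z_A|E') = -\inf_{\sigma_{E'}} D_\alpha(\rho_{Z_AE'}||\,\II_{Z_A}\otimes\sigma_{E'})$ already gives the valid bound $H_\alpha(Z_A|E')\ge -D_\alpha(\rho_{Z_AE'}||\,\II_{Z_A}\otimes\rho_{E'})$. The last step is an identity, again proved from purity and the copy structure, that converts this fixed-reference quantity into $D_\beta(\cG(\rho_{AB})||\cZ\circ\cG(\rho_{AB}))$ with dual order $\beta=\alpha^{-1}$; this is the fixed-$\sigma$ analogue of the duality invoked above, and its proof rests on the standard transpose/Schmidt manipulations of sandwiched Rényi quantities on a pure state. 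The two orders are consistent because $\beta=\alpha^{-1}\le\gamma$ (indeed $\gamma-\beta=(1-\alpha^{-1})^2/(2-\alpha^{-1})\ge0$), so monotonicity of the divergence in its order places $D_\beta(\cG(\rho_{AB})||\cZ\circ\cG(\rho_{AB}))$ below the exact value, i.e. below $\inf_{\sigma}D_\gamma(\cG(\rho_{AB})||\cZ\circ\cG(\sigma_{AB}))\le D_\gamma(\cG(\rho_{AB})||\cZ\circ\cG(\rho_{AB}))$. The two technical pressure points are therefore the domain reduction in the equality and the precise fixed-reference duality identity that fixes the order to $\beta=\alpha^{-1}$ in the bound.
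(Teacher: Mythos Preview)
Your plan for the equality~\eqref{eq:D_gamma} starts correctly with duality and gets one direction via data processing under $V_Z^\dagger$, but the reverse inequality---the one you flag as the ``genuine obstacle''---is not resolved by a support/pinching argument. The paper obtains it by bounding $-H_\gamma(Z_A|RABS)$ above by the \Renyi relative entropy of entanglement $E_{R,\gamma}^{Z_A|RABS}$ across the $Z_A{:}RABS$ cut (Eq.~(45) of Ref.~\cite{zhu2017coherence}), then observing that every $V_Z\cZ(\sigma_{RABS})V_Z^\dagger$ is separable across that cut; isometric invariance under $V_Z$ closes the sandwich. This entanglement-measure step is the missing idea: restricting the optimizer $\tau_{RABS}$ to the structured family $\cZ\circ\cG(\sigma_{AB})$ does not follow from the support structure of $\cG$ alone.

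Your route to the lower bound~\eqref{eq: divergence bound} has a concrete gap. After the valid relaxation $H_\alpha(Z_A|E')\geq H_\alpha^\downarrow(Z_A|E')$, you invoke a ``fixed-reference duality identity'' sending the sandwiched down-arrow entropy directly to $D_\beta(\cG(\rho_{AB})\Vert\cZ\circ\cG(\rho_{AB}))$ with $\beta=\alpha^{-1}$. No such identity exists: on a pure state the dual of $H_\alpha^\downarrow(Z_A|E')$ is $-\bar H_\beta(Z_A|RABS)$, the Petz \emph{up-arrow} entropy (Prop.~5.8 of~\cite{tomamichel2015quantum} read in reverse), which carries an infimum and is built from the Petz divergence, not the sandwiched $D_\beta$ with a fixed second argument. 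The paper instead first relaxes sandwiched to Petz, $H_\alpha(Z_A|E')\geq\bar H_\alpha(Z_A|E')$, and only then applies the duality $\bar H_\alpha(Z_A|E')=-H_\beta^\downarrow(Z_A|RABS)$, which lands precisely on the sandwiched $D_\beta(\rho_{Z_ARABS}\Vert\II_{Z_A}\otimes\rho_{RABS})$; a final DPI under $V_Z^\dagger$ yields~\eqref{eq: divergence bound}. The order of operations---Petz relaxation \emph{before} duality---is what simultaneously fixes the order to $\beta=\alpha^{-1}$ and the reference state to $\cZ\circ\cG(\rho_{AB})$.
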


The proof for Theorem \ref{thm: duality renyi divergence} can be found in Appendix \ref{app: duality renyi divergence}. Theorem \ref{thm: duality renyi divergence} allows us to make use of the convexity of the sandwiched \Renyi divergence for $\beta \in (0.5,1)$ to optimize the single-round \Renyi entropy. From the above, Eq. ~\eqref{eq:D_gamma} gives a tight bound to the \Renyi entropy in the form of a \Renyi divergence. However, it is worth noting that it is helpful to further relax the bound to Eq. \eqref{eq: divergence bound} to remove the infimum, since the infimum demands an additional optimization procedure which cannot be trivially implemented by the Frank-Wolfe algorithm. Thus, using Eq. \eqref{eq: iid renyi entropy}, we now have 
\begin{equation}
\begin{aligned}
    H_\alpha(Z_A^n|E_n')_\rho \geq \min_{\rho_{AB}} f_\beta(\rho_{AB})*p_\textnormal{gen}*N,
\end{aligned}
\label{eq: Renyi LB objective}
\end{equation}
where the objective function to be optimized is
\begin{equation}
    f_\beta(\rho_{AB}) = \Tr(\cG(\rho_{AB}))\cdot D_\beta\Bigl(\cG(\rho_{AB})\bigl|\bigr|\cZ\circ\cG(\rho_{AB})\Bigr).
    \label{eq: objective function}
\end{equation}
Finally, by combining Eqs. \eqref{eq: Decrease accept}, \eqref{eq: Decrease EV} and \eqref{eq: Decrease EC} with \eqref{eq: Renyi LB objective}, and substituting into Theorem \eqref{thm: PA without smoothing}, we have the desired secrecy relation
\begin{equation}
\begin{aligned}
    &\frac{1}{2}\Pr[\accept]\normlarge{\rho_{K_A\tilde CE_n'|\accept} -\bar{\rho}_{K_A}\otimes\rho_{\tilde C E_n'|\accept}}\\
    &\leq 2^{-\frac{\alpha-1}{\alpha}\left(H_\alpha(Z_A^n|E_n')_\rho+ 2 + \log_2(\epsilon_\textnormal{EV})-\lambda_{EC}- l\right)} \\
    &\leq 2^{-\frac{\alpha-1}{\alpha}\left(\min_{\rho_{AB}}f_\beta(\rho_{AB})*p_\textnormal{gen}*N + 2 + \log_2(\epsilon_\textnormal{EV})-\lambda_{EC}- l\right)} \\
    &\leq \epsilon_\textnormal{PA}.
\end{aligned}
\label{eq: Final secrecy relation}
\end{equation}
The key length can then be written as 
\begin{equation}
\begin{aligned}
    &l_\textnormal{ i.i.d.}\leq \min_{\rho_{AB}} f_\beta(\rho_{AB})*p_\textnormal{gen}*N - g(\alpha),\\
    &g(\alpha) = \frac{\alpha}{\alpha-1}\log_2\left(\frac{1}{\epsilon_{\textnormal{PA}}}\right) +\lambda_{EC}+\log_2\left(\frac{1}{\epsilon_\textnormal{EV}}\right)-2. 
\end{aligned}
\label{eq: Final key length}
\end{equation} 

\subsection{Numerical Optimization of \Renyi Entropy}
Next, we will optimize the objective function derived in the previous section by generalizing the numerical framework in~\cite{winick2018reliable}, which obtains the lower bound to the key length via a two-step process. The original framework minimizes the von Neumann entropy to obtain the worst-case scenario of an eavesdropping attack. We will show how this framework can be extended to the optimization of \Renyi entropies. We begin by first defining the set of states over which the optimization takes place. Following from Ref \cite{george2021numerical}, the set of density matrices over which the optimization is performed for an EB protocol is defined as
\begin{equation}
\begin{aligned}
    \mathbf{S} = &\{ \rho \in \cD(\cH_A\otimes \cH_B)| \quad \exists F \in \cP(\Sigma) \\
    &\textnormal{such that}  
    \quad \norm{\Phi_\cP(\rho)- F} \leq\mu \\
    &\quad \quad \quad \quad \quad \norm{F - \bar{F}}\leq t\},
\end{aligned}
\label{eq: step 1 set}
\end{equation}
where $\cP(\Sigma)$ is the probability distribution of size $\Sigma$. The first inequality constraint models the parameter estimation step, where only the set of states that are a $\mu$ distance away from the agreed frequency $F$ is accepted due to finite-size effects. Here, we have 
\begin{equation}
    \mu = \sqrt{2}\sqrt{\frac{\ln(1/\epsilon_\textnormal{PE})+|\Sigma|\ln(m+1)}{m}},
\end{equation} where $\epsilon_\textnormal{PE}$ is the error probability of the parameter estimation step, and $m$ is the number of signals used for parameter estimation. The map $\Phi_\cP(\rho) = \sum_i \Tr(\rho \tilde{\Gamma}_i) \outerp i i$, where $\tilde{\Gamma}_i=M_A^i\otimes M_B^i$ maps $\rho$ into a probability of the measurement outcome by Alice and Bob. The second inequality takes into account slight deviations of the agreed statistics $F$ from the ideal statistics $\bar{F}$ in a practical QKD implementation. For PM protocols, an additional constraint is needed to account for Alice's certainty of her measurement outcome, i.e. $\Tr(\Gamma_i \rho) = \gamma_i , \quad \forall i \in |\Sigma|$, where $\Gamma_i=M_A^i\otimes \II_B$. See Appendix \ref{appendix:constraint} to express Eq. \eqref{eq: step 1 set} in SDP form.

Next, we outline the detailed minimization algorithm introduced by Ref \cite{winick2018reliable}. The minimization algorithm is done in two steps. In the first step, the key rate is cast as a convex objective function $f_\beta(\rho)$ that is a function of $\rho$ over set $\mathbf{S}$. For this step, the Frank-Wolfe algorithm \cite{frank1956algorithm} is used to obtain a state close to the optimal. The algorithm begins by starting from an initial point $\rho_0$ that is close to the identity, i.e. $\min_{\rho_0\in \mathbf{S}} \norm{\rho_0-\II}$. Subsequently, a step size $\Delta \rho$ is computed iteratively to give the next point. $\Delta \rho$ is cast as an SDP i.e. $\Delta \rho := \arg \min_{\Delta \rho} \Tr[(\Delta \rho)^T \nabla f_\beta(\rho_i)]$ where $\nabla f_\beta(\rho_i)$ is the gradient of the objective function at each iteration. The iteration is terminated once the state obtained is close to the optimal point (See Algorithm \ref{algo:min} and Figure \ref{fig:frankewolfe}). 
\begin{algorithm}
\caption{Step 1 Minimization \cite{winick2018reliable}}\label{algo:min}
Let $\epsilon > 0, \rho_0 \in \mathbf{S}$ and set $i=0$. \\
Compute $\Delta\rho:= \arg \min_{\Delta \rho}\Tr[(\Delta \rho)^T\nabla f_\beta(\rho_i)]$ s.t. $\Delta \rho+\rho_i\in \mathbf{S}$.\\
If $\Tr[(\Delta \rho)^T\nabla f_\beta(\rho_i)]<\epsilon$ then STOP. \\
Find $\lambda \in (0,1)$ that minimizes $f_\beta(\rho_i+\lambda\Delta \rho)$. \\
Set $\rho_{i+1}=\rho_i+\lambda\Delta \rho,i\leftarrow i+1$. Repeat 2-5.
\end{algorithm}

Upon completion of the first step of the minimization, a linearization is performed about the near-optimal state obtained from Step 1 such that the optimal point $f_\beta(\rho^*)$ is lower bounded by the following (Eqns. 77-79 of Ref \cite{winick2018reliable})
\begin{align}
    f_\beta(\rho^*)&\geq f_\beta(\hat{\rho})+\Tr((\rho^*-\hat{\rho})^T\nabla f_\beta(\hat{\rho}))\\
    &\geq f_\beta(\hat{\rho})-\Tr(\hat{\rho}^T\nabla f_\beta(\hat{\rho}))+\min_{\sigma \in \mathbf{S}}\Tr(\sigma^T \nabla f_\beta(\hat{\rho})). \label{eq: step 1 optimization}
\end{align} 
\begin{figure}[htb!]    \includegraphics[width=0.8\linewidth]{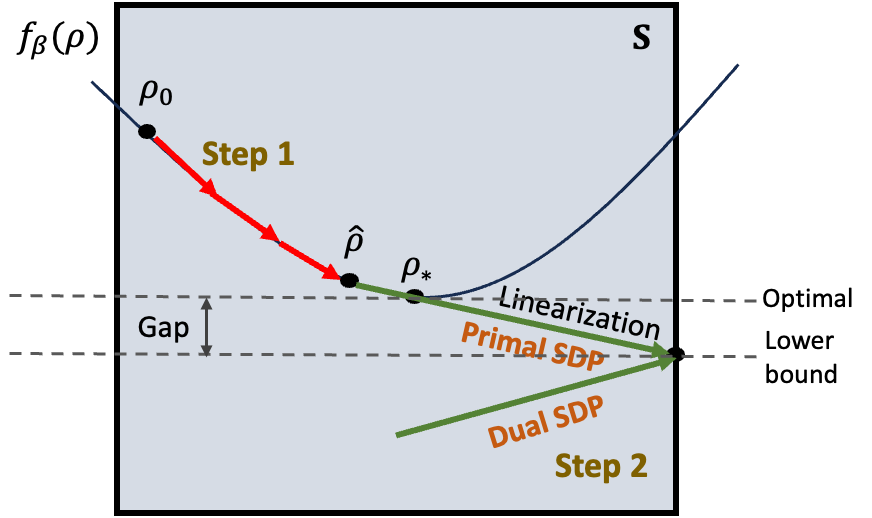}
    \caption{An illustration of the Frank-Wolfe minimization algorithm as outlined by Ref \cite{winick2018reliable}.}
    \label{fig:frankewolfe}
\end{figure}

Using the objective function $f_\beta(\rho)$ derived in Section \ref{subsec: key length formulation}, we can then apply the Frank-Wolfe algorithm to obtain the minimum key length. To do this, we also derive the gradient of the objective function which is needed for the Frank-Wolfe algorithm implementation and is shown in the Theorem below.
\begin{theorem}\label{thm: gradient func}
For $\beta \in [0.5,1)$, the gradient of the objective function $f_\beta(\rho)$ is given by
    \begin{equation}
    \begin{aligned}
        \nabla f_\beta(\rho) =&\cG^\dagger\biggl[\II \cdot D_\beta\Bigl(\cG(\rho)\bigl|\bigr|\cZ\circ\cG(\rho)\Bigr) \biggr] +\\
        &\biggl.\Tr\bigl(\cG(\rho)\bigr)\cdot \nabla D_\beta\Bigl(\cG(\rho)\bigl|\bigr|\cZ\circ\cG(\rho)\Bigr),
    \end{aligned}
    \end{equation}
    where the gradient of the \Renyi divergence is given by 
    \begin{equation}
    \begin{aligned}
        &\nabla D_\beta\Bigl(\cG(\rho)\bigl|\bigr|\cZ\circ\cG(\rho)\Bigr)\\ &= 
        \frac{1}{(\beta-1)}\cG^\dagger\left[\frac{\chi_1 + \chi_2 + \chi_3}{Q_\beta (\cG(\rho)||\cZ\circ\cG(\rho))} - \frac{\II}{\Tr(\cG(\rho))}\right],
    \end{aligned}
\end{equation}
and 
\begin{align}
    &\chi_1 = \cZ\left(\beta L(\mu)\int_0^\infty \frac{\cG(\rho)\cZ\circ\cG(\rho)^\mu \Xi^{\beta-1}}{(\cZ\circ\cG(\rho)+t)^{2}}t^\mu dt\right)\\
    &\chi_2 = \beta \cZ\circ\cG(\rho)^\mu \Xi^{\beta-1}\cZ\circ\cG(\rho)^\mu\\
    &\chi_3 = \cZ\left(\beta L(\mu)\int_0^\infty \frac{\Xi^{\beta-1}\cZ\circ\cG(\rho)^\mu\cG(\rho)}{(\cZ\circ\cG(\rho)+t)^{2}}t^\mu dt\right)\\
    &Q_\beta(\rho\|\sigma)=\pnormp{\Xi}{\beta}, \quad \Xi = \cZ\circ\cG(\rho)^{\mu}\cG(\rho)\cZ\circ\cG(\rho)^{\mu}, \\
    &\mu = \frac{1-\beta}{2\beta}, \quad L(\mu) = \frac{\sin(\pi\mu)}{\pi}.
\end{align} 
\end{theorem}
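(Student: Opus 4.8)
The plan is to treat $f_\beta(\rho)=T(\rho)\,D(\rho)$ as a product of the scalar $T(\rho)=\Tr(\cG(\rho))$ and the scalar $D(\rho)=D_\beta\bigl(\cG(\rho)\|\cZ\circ\cG(\rho)\bigr)$, and to apply the product rule $\nabla f_\beta=(\nabla T)\,D+T\,(\nabla D)$ with respect to the (real) Hilbert--Schmidt inner product. Since $T$ is linear in $\rho$, its gradient is immediate: from $\Tr(\cG(\rho))=\Tr\bigl(\cG^\dagger(\II)\rho\bigr)$ we get $\nabla T=\cG^\dagger(\II)$, so the first term is $\cG^\dagger(\II)\,D_\beta=\cG^\dagger[\II\cdot D_\beta]$, using that $D_\beta$ is a scalar and $\cG^\dagger$ linear. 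All the work is therefore in computing $\nabla_\rho D_\beta\bigl(\cG(\rho)\|\cZ\circ\cG(\rho)\bigr)$.

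Because the divergence depends on $\rho$ only through $\omega:=\cG(\rho)$, I would first differentiate with respect to $\omega$ and then pull back by the adjoint channel via the chain rule $\nabla_\rho(\cdot)=\cG^\dagger\bigl(\nabla_\omega(\cdot)\bigr)$, which accounts for the outer $\cG^\dagger$ in the claimed formula. Writing $\sigma=\cZ(\omega)$, $\Xi=\sigma^\mu\omega\sigma^\mu$ and $Q_\beta=\Tr(\Xi^\beta)=\pnormp{\Xi}{\beta}$, the definition gives $D_\beta=\tfrac{1}{\beta-1}\log_2\!\bigl(Q_\beta/\Tr\omega\bigr)$. Differentiating the logarithm reduces the problem to $\nabla_\omega Q_\beta$ together with $\nabla_\omega\Tr\omega=\II$, producing the bracket $\tfrac{1}{\beta-1}\bigl[\nabla_\omega Q_\beta/Q_\beta-\II/\Tr\omega\bigr]$ (the base of the logarithm only rescales by an overall constant). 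The key object is therefore $\nabla_\omega Q_\beta=\nabla_\omega\Tr(\Xi^\beta)$.

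I would then note that $\omega$ enters $\Xi=\sigma^\mu\omega\sigma^\mu$ in three places: explicitly in the middle, and implicitly through $\sigma=\cZ(\omega)$ in each of the two outer factors $\sigma^\mu$; the three resulting contributions are precisely $\chi_2$, $\chi_1$ and $\chi_3$. For the middle factor, $\Tr(\Xi^\beta)$ differentiates to $\beta\Tr\bigl(\Xi^{\beta-1}\sigma^\mu(d\omega)\sigma^\mu\bigr)$, and cyclicity of the trace gives $\chi_2=\beta\,\sigma^\mu\Xi^{\beta-1}\sigma^\mu$. For the two $\sigma^\mu$ factors I would use the integral (Cauchy/Balakrishnan) representation of the fractional power, $\sigma^\mu=L(\mu)\int_0^\infty \sigma(\sigma+t)^{-1}\,t^{\mu-1}\,dt$ with $L(\mu)=\sin(\pi\mu)/\pi$, whose Fréchet derivative is $d(\sigma^\mu)=L(\mu)\int_0^\infty(\sigma+t)^{-1}(d\sigma)(\sigma+t)^{-1}\,t^\mu\,dt$. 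Substituting $d\sigma=\cZ(d\omega)$ by linearity of the pinching, and using the self-adjointness $\cZ^\dagger=\cZ$ to transfer the pinching onto the free differential $d\omega$, produces the outer $\cZ(\cdots)$ that wraps each $t$-integral and yields $\chi_1$ and $\chi_3$. These two are mutual adjoints, so $\chi_1+\chi_2+\chi_3$ is Hermitian, as any gradient of a real function of a Hermitian variable must be. Collecting the three pieces over $Q_\beta$, subtracting $\II/\Tr\omega$, prefactoring $1/(\beta-1)$ and applying $\cG^\dagger$ then reproduces the stated $\nabla D_\beta$, and combining with the first term gives $\nabla f_\beta$.

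The main obstacle is the differentiation of the non-integer matrix power $\sigma\mapsto\sigma^\mu$: since $\sigma^\mu$ is not a polynomial and $\sigma$ commutes neither with $\omega$ nor with $\Xi^{\beta-1}$, the naive rule $d(\sigma^\mu)=\mu\sigma^{\mu-1}d\sigma$ is invalid, and one must instead invoke the integral representation and track operator ordering inside the resulting $t$-integrals. A secondary subtlety is the interplay between the pinching and the resolvents: because $\sigma=\cZ(\omega)$, the operators $(\sigma+t)^{-1}$ are block-diagonal in the $Z$-basis on $R$, and the identity $\cZ(DAD')=D\,\cZ(A)\,D'$ for block-diagonal $D,D'$ is what allows the resolvent factors to be arranged as written and the three contributions to be combined consistently. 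Finally, one should verify convergence of the $t$-integrals for $\mu\in(0,\tfrac12]$ (equivalently $\beta\in[\tfrac12,1)$) and positivity of $\sigma$ on the relevant support in order to justify differentiating under the integral sign.
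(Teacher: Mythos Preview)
Your proposal is correct and follows essentially the same approach as the paper: product rule on $f_\beta=\Tr(\cG(\rho))\cdot D_\beta$, chain rule through $\cG$ via its adjoint, logarithmic differentiation of $D_\beta$ in terms of $Q_\beta$, product rule on $\Xi=\sigma^\mu\omega\sigma^\mu$ to generate the three $\chi$-terms, and the integral representation of the fractional power (which the paper invokes as Lemma~1 of Rubboli--Tomamichel) to handle $d(\sigma^\mu)$. Your additional remarks on the self-adjointness of $\cZ$, the block-diagonality of the resolvents, the Hermiticity of $\chi_1+\chi_2+\chi_3$, and the convergence range for $\mu$ are nice observations that the paper's proof leaves implicit.
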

The proof of Thm. \ref{thm: gradient func} is given in Appendix \ref{app:gradproof}. 
Here, it is worth noting that 
\begin{align}
    &\int_0^\infty \frac{A}{(B+t)^2}t^\mu dt = \int_0^\infty (B+t)^{-1}A(B+t)^{-1}t^\mu dt\\
    &= \sum_{ij}\int_0^\infty (b_i+t)^{-1}\outerp{b_i}{b_i}A\outerp{b_j}{b_j}(b_j+t)^{-1}t^\mu dt \\
    &= L^{-1}(\mu)\sum_{ij}\frac{b_i^\mu-b_j^\mu}{b_i-b_j}\outerp{b_i}{b_i}A\outerp{b_j}{b_j},
\end{align}
where $\ket{b_i}$ is the eigenbasis of the matrix $B$ corresponding to eigenvalue $b_i$.
Finally, to ensure continuity of the gradient function, the gradient is perturbed with a depolarizing channel $\cD_\epsilon$, $    \cG_\epsilon(\rho) = \cD_\epsilon \circ\cG(\rho)$ for $0 < \epsilon < 1$ \cite{winick2018reliable}.

\section{\label{sec: numerical analysis} Numerical Analysis}
In this section, we conduct numerical analysis of the \Renyi key rate using our generalized framework. As an example, we consider the prepare-and-measure single-photon BB84 protocol under depolarizing and lossy channels. 

\begin{figure}[htb!]
\centering
\includegraphics[width=\linewidth]{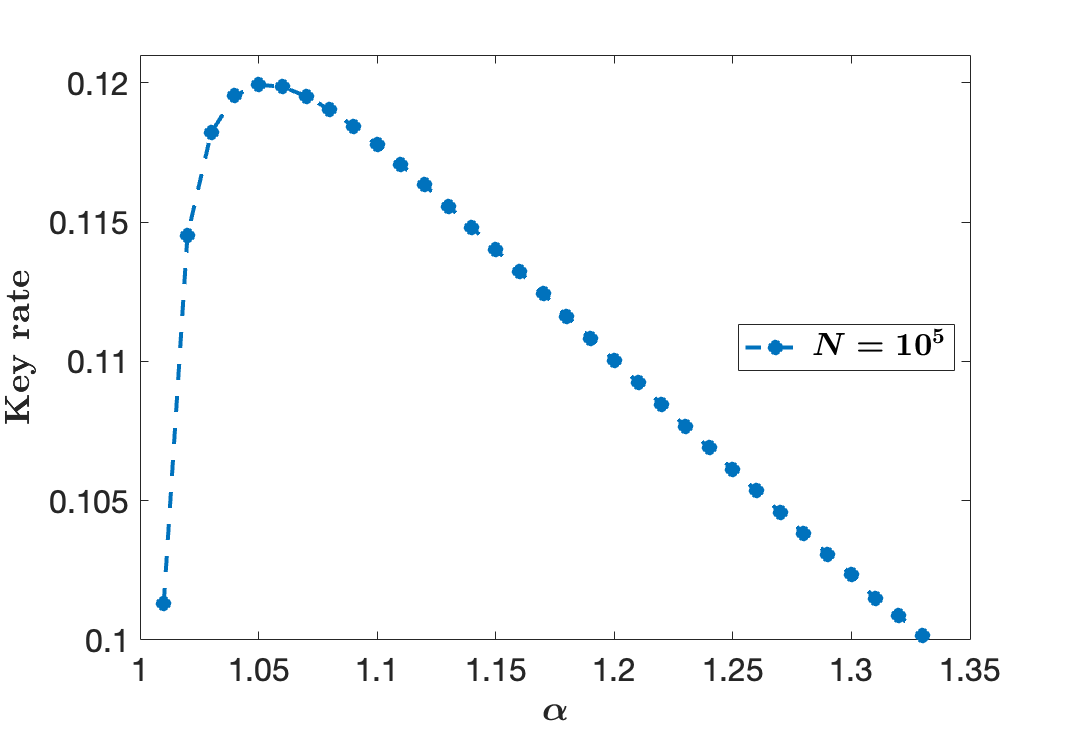}
    \caption{Numerical key rate at different values of $\alpha$ for $N=10^5$. }
    \label{fig:scan_alpha}
\end{figure}
\begin{figure}[htb!]
   \centering
\includegraphics[width=0.9\linewidth]{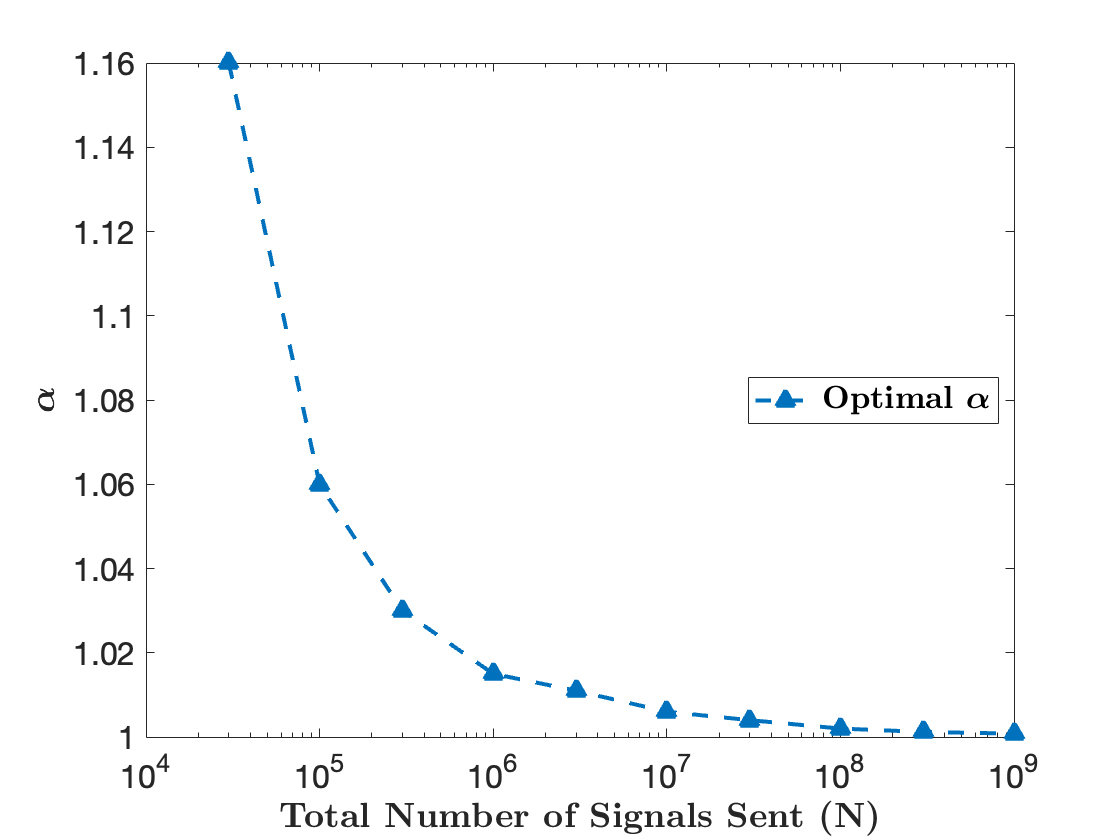}
    \caption{Optimal values of $\alpha$ for each signal block sizes $N$.}
    \label{fig: opt_alpha}
\end{figure}
We begin by studying how the key rate varies with $\alpha$ (See Figure \ref{fig:scan_alpha}). The $\alpha$ dependency of the key length comes from the \Renyi entropy term $H_\alpha (Z_A|E')$ as well as the factor of the $\log_2(1/(\epsilon_{PA}))$ term (See Eq.\eqref{eq: Final key length}). These two terms are competing effects since the \Renyi entropy is monotonically decreasing in $\alpha$ while the factor of the $\log_2$ term is increasing in $\alpha$. Hence, there exists an optimal $\alpha$ value for each signal block size $N$. 

The optimal $\alpha$ for each signal block size is then determined and plotted as shown in Figure \ref{fig: opt_alpha}. From the figure, it can be seen that the optimal $\alpha$ values converge to 1 as $N$ approaches infinity, due to asymptotic equipartition  property \cite{tomamichel2012framework,tomamichel2015quantum}. By optimizing $\alpha$ for each block sizes, we show the advantage of the \Renyi bounds obtained from our work compared to the original von Neumann bound proposed by Ref \cite{george2021numerical} in Figure \ref{fig:compare}. From Figure \ref{fig:compare}, it can be seen that the \Renyi key rate yields a higher bound as compared to the von Neumann key rate in the low block size regime ($N < 10^7$). This is especially evident for signal block size $N=10^5$, where the \Renyi key rate doubles that of the von Neumann key rate. For $N\geq 10^7$, there is no significant difference between the two bounds. 
\begin{figure}
\includegraphics[width=0.9\linewidth]{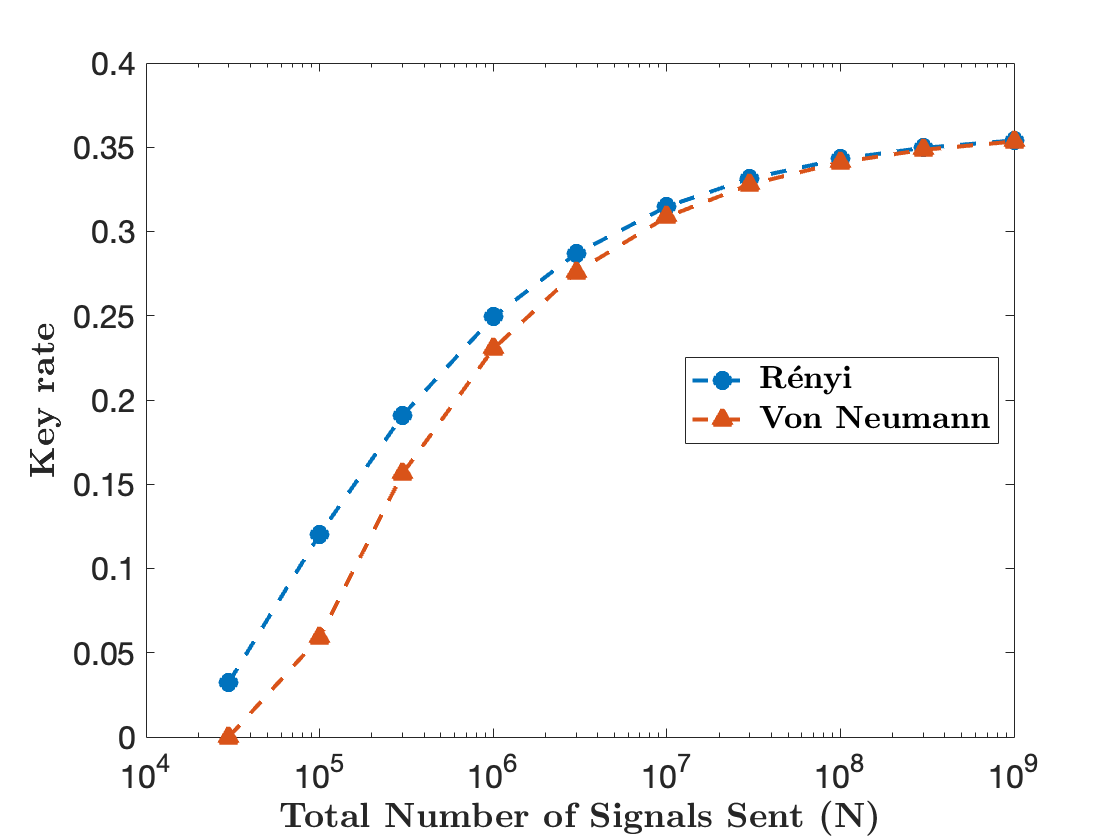}
    \caption{\Renyi key rate computed at optimal $\alpha$ vs the von Neumann key rate with depolarizing probability $p = 0.01$. }
    \label{fig:compare}
\end{figure}
\begin{figure*}[htb!]    
\includegraphics[width=0.8\linewidth,height=0.5\linewidth]{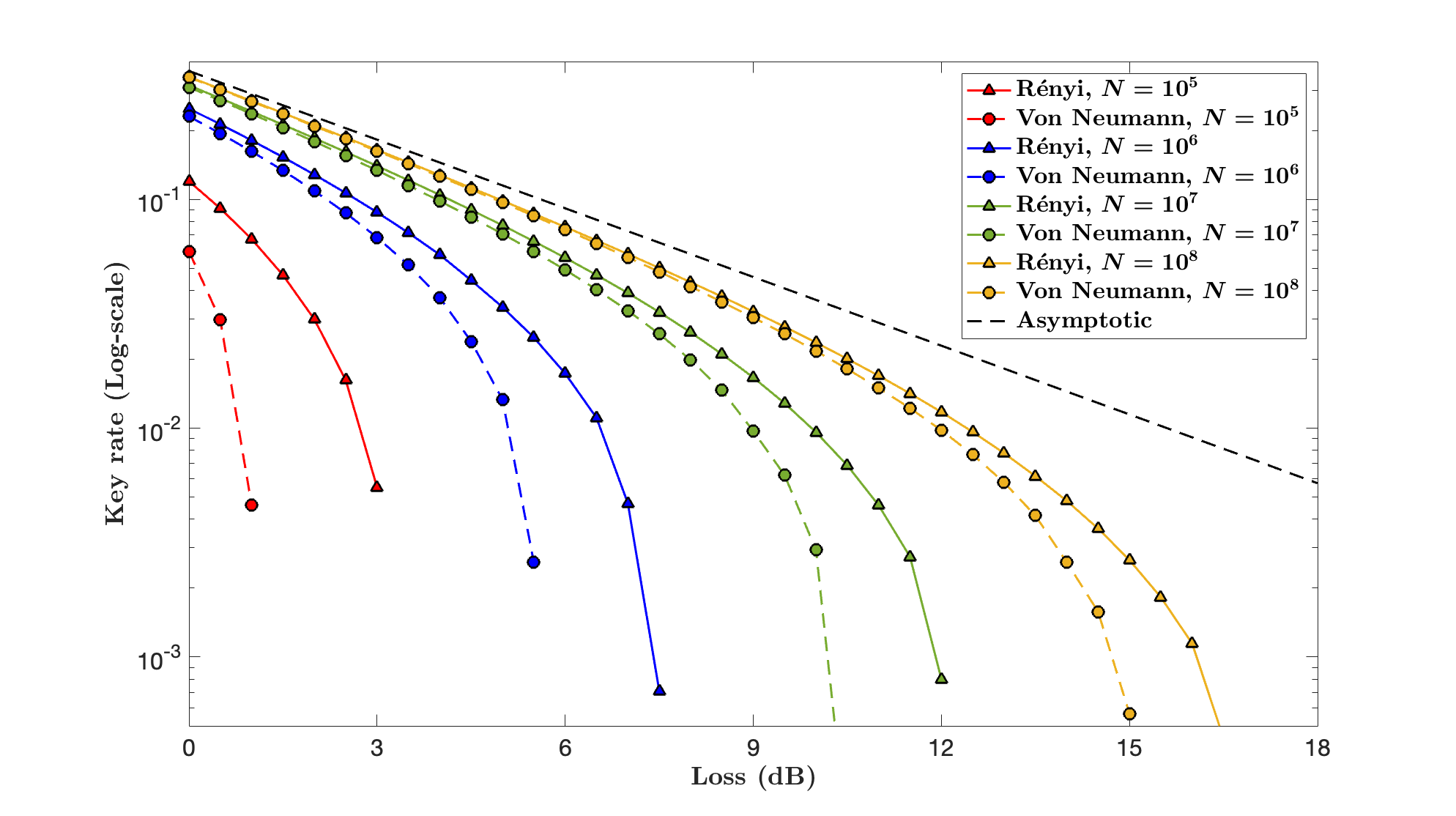}
    \caption{Key rate against channel loss with depolarizing probability $p = 0.01$.}
\label{fig:loss}
\end{figure*}
\par When comparing against different values of loss, the \Renyi entropy key rate also tends to outperform the von Neumann key rate in the high loss regime as seen in Figure \ref{fig:loss}. The \Renyi entropy key rate is shown to be more tolerant towards loss since the von Neumann key rate decreases to zero more rapidly in the high loss regime.

\section{\label{sec: discussion} Discussion and Outlook}
In this work, we have formulated the key rate in terms of the sandwiched \Renyi entropy under collective attack and derived a bound on the sandwiched \Renyi entropy using the \Renyi divergence, and subsequently generalizes the numerical framework originally proposed by Ref \cite{winick2018reliable}. An advantage of this generalization is the ability to optimize the \Renyi entropy in a straightforward way. By bounding the \Renyi entropy directly in terms of the \Renyi divergence via the duality relation, we are able to obtain a tighter lower-bound to the \Renyi key rate in the low block-sizes as well as high loss regime as compared to the conventional von Neumann bound. This improvement makes our generalized framework an important tool for the evaluation of long-range satellite-based QKD protocols, where signal block sizes remain small ($\sim 10^5$) with significant channel loss. 

Similar to the original paper by Winick et al., one drawback of our work is the need for perturbation of the gradient function to ensure a continuous gradient, which leads to numerical imprecision. One way to remedy this is to use the Gaussian interior point method instead \cite{hu2022robust,he2024exploiting}. A recent work \cite{he2025operator} has proposed an interior point solver that can be used to optimize sandwiched Rényi divergences. Future work can be done to integrate this solver into our work to optimize the Rényi entropic bound obtained in Theorem \ref{thm: duality renyi divergence}. In addition, we have also slightly relaxed the bound on the \Renyi entropy as seen in Eq. \eqref{eq: divergence bound}. There might be ways to further tighten this bound by introducing correction terms via continuity bounds on the \Renyi entropy, and we will leave this for future work. 
At the point of writing, we also became aware of a separate and independent work on \Renyi entropy key rate applied to decoy state QKD under coherent attacks \cite{kaminpreparation}. It would be interesting to see how our framework could be applied to the study of other protocols such as the decoy state protocol under the coherent attack assumption.

\section*{Acknowledgment}
The authors wish to thank Jing Yan Haw, Marco Tomamichel, Roberto Rubboli, Ray Ganardi, and Jakub Czartowski for helpful insights and discussions. The authors would also like to acknowledge Ernest Tan, Lars Kamin, and Devashish Tupkary for comments and discussions on the manuscript. This work is supported by the start-up grant of the Nanyang Assistant Professorship at the Nanyang Technological University in Singapore, MOE Tier 1 023816-00001 ``Catalyzing quantum security: bridging between theory and practice in quantum communication protocols", and the National Research Foundation, Singapore under its
Quantum Engineering Programme (National Quantum-Safe Network, NRF2021-QEP2-04-P01).

\section*{Code Availability}
The code used to obtain the data in this work can be found at \url{https://github.com/Rebeccacrb07/openQKDsecurity.git}. The data needed to verify or replicate the results of this
study are included in the article in a way that is interoperable
and reusable, e.g., tabular data.

\newpage
\nocite{*}

\providecommand{\noopsort}[1]{}\providecommand{\singleletter}[1]{#1}%

\clearpage
\onecolumngrid
\setcounter{page}{1}
\appendix
\begin{center}
    \textbf{\MakeUppercase{Supplementary Materials for ``A Generalized Numerical Framework for Improved Finite-Sized Key Rates with \Renyi Entropy''}}
\end{center}

\section{Proof of Theorem \ref{thm: duality renyi divergence}}
\label{app: duality renyi divergence}
Here we prove that the \Renyi entropy can be expressed in terms of the \Renyi divergence via the duality relation, where $\frac{1}{\alpha}+\frac{1}{\gamma}=2$ (Proposition 5.7 of Ref \cite{tomamichel2015quantum}) for $\alpha\in(1,2]$. 
During post-processing, we have that $\cG(\rho_{AB})=\rho_{RABS}$. Since Eve also obtains a copy of $S$ during the classical communication and stores it at $E_S$, the resulting state is $\rho_{RABSE_S}$, with its purification being $\rho_{RABSE_SE}=\rho_{RABSE'}$. Given the pure state $\rho_{RABSE'}$, Alice applies a key map isometry $V_Z=\sum_i \ket i_{Z_A}\otimes Z_R^i$ that models the $Z$ projective measurement on $R$ and identity on all other systems, and subsequently stores the result in the final key register $Z_A$ \cite{coles2012unification}.  We thus have $\rho_{Z_ARABSE'}=V_Z\rho_{RABSE'}V_Z^\dagger$ which is also a pure state. Then, 
\begin{align}
	H_\alpha (Z_A|E') &= -H_\gamma (Z_A|RABS) \\
	&= \inf_{\sigma_{RABS}}D_\gamma(\rho_{Z_ARABS}||\II_{Z_A}\otimes \sigma_{RABS})\\
	&\geq \inf_{\sigma_{RABS}}D_\gamma(V_Z^\dagger \rho_{Z_ARABS}V_Z || V_Z^\dagger (\II_{Z_A}\otimes \sigma_{RABS}) V_Z)\label{eq: DPI}\\
	&= \inf_{\sigma_{RABS}} D_\gamma(\rho_{RABS}||\sum_i Z_R^i(\sigma_{RABS})Z_R^i) \\
	&= \inf_{\sigma_{RABS}}D_\gamma(\cG(\rho_{AB})||\cZ(\sigma_{RABS})).
\end{align}

The first equality stems from the duality relation of the \Renyi entropy \cite{beigi2013sandwiched,muller2013quantum,zhu2017coherence} for pure state $\rho_{Z_ARABSE'}$. The inequality (\ref{eq: DPI}) is due to data-processing inequality, since $V_Z^\dagger$ is a partial isometry ($V_Z V_Z^\dagger \neq \II$).

\begin{align}
	H_\alpha (Z_A|E') &=-H_\gamma (Z_A|RABS)\\
	&\leq E_{R,\gamma}^{Z_A|RABS}(\rho_{Z_ARABS}) \\
	&= \inf_{\sigma\in\textnormal{sep}} D_\gamma(\rho_{Z_ARABS}||\sigma_{Z_ARABS}) \\
	&= \inf_{\sigma_{RABS}}D_\gamma(V_Z\rho_{RABS}V_Z^\dagger||V_Z\cZ(\sigma_{RABS})V_Z^\dagger)\\
	&= \inf_{\sigma_{RABS}}D_\gamma(\cG(\rho_{AB})||\cZ(\sigma_{RABS})).
\end{align}
The first inequality is taken from Eq. (45) of Ref \cite{zhu2017coherence}, where $E_{R,\gamma}^{Z_A|RABS}$ is the \Renyi relative entropy of entanglement between $Z_A$ and $RABS$. The second inequality is then obtained from the fact that $V_Z \cZ(\rho_{RABS})V_Z^\dagger$ is a separable state as noted by Ref \cite{coles2012unification}, since the system $R$ in $\cZ(\rho_{RABS})$ is decohered in $Z$ due to the projective measurement. As noted in Ref \cite{coles2012unification}, decoherence of a state occurs when a projective measurement is applied to a system's reduced state, resulting in the loss of off-diagonal elements, therefore becoming a separable state. The final equality comes from the isometric invariance of the \Renyi divergence as noted in Ref \cite{tomamichel2015quantum} ($V_Z^\dagger  V_Z = \II$). From the above, $\inf_{\sigma_{RABS}}D_\gamma(\cG(\rho_{AB})||\cZ(\sigma_{RABS}))\leq H_\alpha(Z_A|E')\leq \inf_{\sigma_{RABS}}D_\gamma(\cG(\rho_{AB})||\cZ(\sigma_{RABS}))$, therefore all the above inequalities saturate. Giving us
\begin{equation}
    H_\alpha(Z_A|E')=\inf_{\sigma_{RABS}}D_\gamma(\cG(\rho_{AB})||\cZ(\sigma_{RABS})),
    \label{eq: infD bound}
\end{equation}
thus establishing Eq.~\eqref{eq:D_gamma}. We next proceed to lower bound the \Renyi divergence. We first state the definition for the sandwiched (down-arrow) entropy $\tilde{H}^\downarrow_\alpha$ as
\begin{equation}
H_\alpha^\downarrow (A|B)_\rho := -D_\alpha(\rho_{AB}||\II_A \otimes \rho_B).
\end{equation}
We also have the following definition for Petz \Renyi entropy
\begin{align}
    \bar{H}_\alpha(A|B)_\rho &:= -\inf_{\sigma_B}\bar{D}_\alpha(\rho_{AB}||\II_A\otimes \sigma_{B})\\
    \bar{H}_\alpha^\downarrow(A|B)_\rho &:= -\bar{D}_\alpha(\rho_{AB}||\II_A\otimes\rho_B),
\end{align}
where $\bar{D}_\alpha(\rho||\sigma) = \frac{1}{\alpha-1}\log_2\frac{\Tr(\rho^\alpha \sigma^{1-\alpha})}{\Tr{\rho}}$  is the Petz \Renyi divergence \cite{tomamichel2015quantum}. Then, for $\beta = \alpha^{-1}$ where $\alpha\in(1,2]$, we have
\begin{align}
    H_\alpha(Z_A|E') 
    &\geq \bar{H}_\alpha(Z_A|E') \label{eq: sandwiched to petz}\\
    &= -H_\beta^\downarrow(Z_A|RABS) \label{eq: petz duality}\\
    &= D_\beta(\rho_{Z_ARABS}||\II_{Z_A}\otimes \rho_{RABS})\\
    &\geq D_\beta(V_Z^\dagger \rho_{Z_ARABS}V_Z||V_Z^\dagger(\II_{Z_A}\otimes\rho_{RABS})V_Z)\\
    &=D_\beta(\cG(\rho_{AB})||\cZ\circ\cG(\rho_{AB})).
\end{align}
Eq.~\eqref{eq: sandwiched to petz} is given by Eq. (
5.69) of Ref \cite{tomamichel2015quantum}, while Eq.~\eqref{eq: petz duality} is given by Proposition 5.8 of Ref \cite{tomamichel2015quantum}. The rest of the proof follows similarly to the proof for Eq.~\eqref{eq: infD bound}. Using Eq.~\eqref{eq: infD bound} thus gives us $\inf_{\sigma_{RABS}}D_\gamma(\cG(\rho_{AB})||\cZ(\sigma_{RABS})) \geq D_\beta(\cG(\rho_{AB})||\cZ\circ\cG(\rho_{AB}))$ which proves Eq.~\eqref{eq: divergence bound} and concludes Theorem \ref{thm: duality renyi divergence}.

\section{Proof of Theorem \ref{thm: gradient func}}\label{app:gradproof}
We begin this section by first deriving the gradient of the sandwiched \Renyi divergence as defined in Def.~\ref{def:entropy}. To begin, we first use the following shorthand $Q_\beta$:
\begin{equation}
    Q_\beta(\cG(\rho)||\cZ\circ\cG(\rho))\equiv\pnormp{\Xi}{\beta} = \Tr\left(\Xi\right)^\beta
\end{equation}
where 
\begin{equation}
    \Xi \equiv \cZ\circ\cG(\rho)^{\mu}\cdot \cG(\rho)\cdot \cZ\circ\cG(\rho)^{\mu}
\end{equation}
and we have denoted $\mu = \frac{1-\beta}{2\beta}$. The \Renyi divergence is related to $Q_\beta$ via $D_\beta(\rho||\sigma) \equiv \frac{1}{\beta-1}\log_2\left(\frac{Q_\beta(\rho||\sigma)}{\Tr(\rho)}\right)$. We next need to show the derivative of $D_\beta(\cG(\rho)||\cZ\circ\cG(\rho))$. To begin, we introduce the Fr{\'e}chet's derivative (Sec. V.3 and Sec. X.4 of Ref \cite{bhatia2013matrix}) of the \Renyi divergence in the direction of $\tau$ at a point $\rho$ as
\begin{align}
    \partial_\tau D_\beta(\cG(\rho) \| \cZ\circ\cG(\rho))
    &\left. \equiv \partial_x D_\beta\bigl(\cG(\rho_x) \| \cZ\circ\cG\left(\rho_x\right)\bigr)\right|_{x=0} \\
    & \equiv \Tr[(\tau-\rho)\nabla D_\beta(\cG(\rho)\|\cZ\circ\cG(\rho))],
    \label{eq: frechet Divergence}
\end{align}
\noindent where we have denoted $\rho_x \equiv (1-x)\rho+x\tau$ as the state $\rho$ being parameterized with a scalar $x$, with $\partial_\tau = \frac{\partial}{\partial \tau}$ and $\partial_x = \frac{\partial}{\partial x}$. Then, the derivative of $D_\beta(\cG(\rho_x)||\cZ\circ\cG(\rho_x))$ with respect to $x$ is
\begin{equation}
    \partial_x D_\beta(\cG(\rho_x)||\cZ\circ\cG(\rho_x))|_{x=0} = \frac{1}{\beta-1}\left.\frac{\Tr(\cG(\rho_x))}{Q_\beta(\cG(\rho_x)||\cZ\circ\cG(\rho_x))}\partial_x \left( \frac{Q_\beta (\cG(\rho_x)||\cZ\circ\cG(\rho_x))}{\Tr(\cG(\rho_x))}\right)\right|_{x=0}.
\label{eq: derivative of divergence wrt x}
\end{equation}
Applying the quotient rule allows us to rewrite Eq.~\eqref{eq: derivative of divergence wrt x} as
\begin{equation}
    \partial_x D_\beta(\cG(\rho_x)||\cZ\circ\cG(\rho_x))|_{x=0} = \frac{1}{\beta-1}\left[\frac{\partial_x Q_\beta(\cG(\rho_x)||\cZ\circ\cG(\rho_x))}{Q_\beta(\cG(\rho_x)||\cZ\circ\cG(\rho_x))} - \frac{\partial_x \Tr(\cG(\rho_x))}{\Tr(\cG(\rho_x))}\right]_{x=0}.
    \label{eq: derivative of divergence wrt x quotient rule}
\end{equation}
Next we derive the gradient of $Q_\beta$ via the lemma below.
\begin{lemma}
Consider a state $\rho$, we have that for $\beta\in(0,1)$ and $\mu = \frac{1-\beta}{2\beta}$, 
\begin{equation}
    \partial_\tau Q_\beta(\cG(\rho)||\cZ\circ\cG(\rho)) = \Tr[\cG^\dagger(\chi_1+\chi_2+\chi_3)(\tau-\rho)]
\end{equation}
where 
\begin{align}
    \chi_1 &= \cZ\left(\beta \frac{\sin(\pi\mu)}{\pi}\int_0^\infty (\cZ\circ\cG(\rho)+t)^{-1}\cG(\rho)\cZ\circ\cG(\rho)^\mu \Xi^{\beta-1}(\cZ\circ\cG(\rho)+t)^{-1}t^\mu dt\right)\\
    \chi_2 &= \beta \cZ\circ\cG(\rho)^\mu \Xi^{\beta-1}\cZ\circ\cG(\rho)^\mu\\
    \chi_3 &= \cZ\left(\beta \frac{\sin(\pi\mu)}{\pi}\int_0^\infty (\cZ\circ\cG(\rho)+t)^{-1}\Xi^{\beta-1}\cZ\circ\cG(\rho)^\mu\cG(\rho)(\cZ\circ\cG(\rho)+t)^{-1}t^\mu dt\right),
\end{align}
for $\cG^\dagger(\cdot) = \sum_s G_s^\dagger (\cdot) G_s$, and $\cZ(\cdot)=\sum_i(Z^i_R\otimes \II_{ABS})(\cdot)(Z^i_R\otimes \II_{ABS})$.
\label{lm: Q derivative}
\end{lemma}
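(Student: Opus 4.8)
The plan is to compute the Fréchet derivative directly from definition \eqref{eq: frechet Divergence}, parameterizing $\rho_x=(1-x)\rho+x\tau$ so that $\partial_x\rho_x|_{x=0}=\tau-\rho$, and then to read off $\chi_1+\chi_2+\chi_3$ as the operator sitting against $(\tau-\rho)$ inside the trace. To lighten notation I would set $\omega\equiv\cG(\rho)$ and $\zeta\equiv\cZ\circ\cG(\rho)=\cZ(\omega)$, so that $\Xi=\zeta^\mu\omega\zeta^\mu$ and $Q_\beta=\Tr(\Xi^\beta)$. Since $\cG$ and $\cZ$ are linear, $\partial_x\cG(\rho_x)|_{x=0}=\cG(\tau-\rho)\equiv\dot\omega$ and $\partial_x\,\cZ\circ\cG(\rho_x)|_{x=0}=\cZ(\dot\omega)\equiv\dot\zeta$.

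First I would differentiate the trace power using the identity $\partial_x\Tr(\Xi^\beta)=\beta\Tr(\Xi^{\beta-1}\dot\Xi)$, which holds for Hermitian $\Xi$ even when $\Xi$ and $\dot\Xi$ do not commute, because the off-diagonal Daleckii--Krein contributions vanish under the trace. Applying the product rule to $\Xi=\zeta^\mu\omega\zeta^\mu$ then splits $\dot\Xi$ into three pieces: a middle term $\zeta^\mu\dot\omega\,\zeta^\mu$ coming from the inner $\omega$, and two outer terms $\dot{(\zeta^\mu)}\,\omega\zeta^\mu$ and $\zeta^\mu\omega\,\dot{(\zeta^\mu)}$ coming from the two factors $\zeta^\mu$.

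The key technical ingredient, and the \emph{main obstacle}, is the Fréchet derivative of the fractional power $\zeta\mapsto\zeta^\mu$, since $\mu=\frac{1-\beta}{2\beta}$ is non-integer. For this I would invoke the integral representation $A^\mu=L(\mu)\int_0^\infty A(A+t)^{-1}\,t^{\mu-1}\,dt$, valid for $0<\mu<1$ (which covers the regime $\beta\in[1/2,1)$ of interest), and differentiate under the integral sign using $\partial_x(A+t)^{-1}=-(A+t)^{-1}\dot A(A+t)^{-1}$ together with $A(A+t)^{-1}=\II-t(A+t)^{-1}$. This produces $\dot{(\zeta^\mu)}=L(\mu)\int_0^\infty (\zeta+t)^{-1}\dot\zeta(\zeta+t)^{-1}t^{\mu}\,dt$, where the extra factor of $t$ upgrades $t^{\mu-1}$ to the $t^{\mu}$ weighting seen in $\chi_1$ and $\chi_3$, and $L(\mu)$ is exactly the prefactor. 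Justifying the exchange of $\partial_x$ and $\int_0^\infty$ (by dominated convergence, using invertibility of $\zeta$ on its support and the decay of the resolvent integrand) is the step that genuinely requires care.

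Finally I would assemble the three contributions into a single trace against $\dot\omega$. For the middle term, cyclicity gives $\beta\Tr(\zeta^\mu\Xi^{\beta-1}\zeta^\mu\,\dot\omega)=\Tr(\chi_2\dot\omega)$ with no pinching, matching $\chi_2=\beta\zeta^\mu\Xi^{\beta-1}\zeta^\mu$. For each outer term I would substitute the integral for $\dot{(\zeta^\mu)}$, use cyclicity to bring $\dot\zeta$ to the far right, and then replace $\dot\zeta=\cZ(\dot\omega)$ via $\Tr\!\left(M\,\cZ(\dot\omega)\right)=\Tr\!\left(\cZ(M)\,\dot\omega\right)$, valid because the pinching $\cZ$ is self-adjoint; this is precisely what wraps the kernels of $\chi_1$ and $\chi_3$ in an outer $\cZ(\cdots)$. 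Summing the three pieces yields $\partial_x Q_\beta|_{x=0}=\Tr\!\bigl((\chi_1+\chi_2+\chi_3)\,\dot\omega\bigr)=\Tr\!\bigl((\chi_1+\chi_2+\chi_3)\,\cG(\tau-\rho)\bigr)$, and moving $\cG$ onto the other factor through its adjoint $\cG^\dagger(\cdot)=\sum_s G_s^\dagger(\cdot)G_s$ gives $\Tr[\cG^\dagger(\chi_1+\chi_2+\chi_3)(\tau-\rho)]$, which is the claimed identity.
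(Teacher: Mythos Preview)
Your proposal is correct and follows essentially the same route as the paper: reduce $\partial_x\Tr(\Xi_x^\beta)$ to $\beta\,\Tr(\Xi^{\beta-1}\dot\Xi)$, apply the product rule to $\Xi=\zeta^\mu\omega\zeta^\mu$, invoke the integral representation for the Fr\'echet derivative of the fractional power (the paper cites this as Lemma~1 of Ref.~\cite{rubboli2024new} rather than deriving it from $A^\mu=L(\mu)\int_0^\infty A(A+t)^{-1}t^{\mu-1}\,dt$), and then rearrange via trace cyclicity, self-adjointness of the pinching $\cZ$, and the adjoint $\cG^\dagger$. Your explicit flagging of the range restriction $0<\mu<1$ is a caveat the paper leaves unstated, but it is harmless here since only $\beta\in[1/2,1)$ is actually used downstream.
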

\begin{proof}
First, by noting that derivative and trace commute, we have that
\begin{align}
\partial_\tau Q_\beta(\cG(\rho)||\cZ\circ\cG(\rho)) = \partial_x \Tr(\Xi_x^\beta)|_{x=0}= \Tr(\partial_x \Xi_x^\beta)|_{x=0},
\end{align}
where $\Xi_x \equiv \cZ\circ\cG(\rho_x)^{\mu}\cdot \cG(\rho_x)\cdot \cZ\circ\cG(\rho_x)^{\mu}$.
Next, from Lemma 1 of Ref \cite{rubboli2024new}, we see that
\begin{equation}
\partial_x (\Xi_x^\beta ) = \frac{\sin(\pi\beta)}{\pi}\int_0^\infty (\Xi_x+t)^{-1}(\partial_x \Xi_x)(\Xi_x+t)^{-1}t^\beta dt = \frac{\sin(\pi\beta)}{\pi}\int_0^\infty \frac{\partial_x \Xi_x}{(\Xi_x+t)^2}t^\beta dt.
\end{equation}
Taking the trace, we have
\begin{align}
    \Tr[\partial_x (\Xi_x^\beta)] &= \frac{\sin(\pi\beta)}{\pi} \int_0^\infty \Tr[(\Xi_x+t)^{-1}(\partial_x\Xi_x)(\Xi_x+t)^{-1}]t^\beta dt \label{eq: Xi diff 1} \\
    &= \frac{\sin(\pi\beta)}{\pi} \int_0^\infty \Tr[(\Xi_x+t)^{-1}(\Xi_x+t)^{-1}(\partial_x\Xi_x)]t^\beta dt  \label{eq: Xi diff 2}\\
    &= \frac{\sin(\pi\beta)}{\pi} \int_0^\infty \Tr\left[\frac{1}{(\Xi_x+t)^2}(\partial_x\Xi_x)\right]t^\beta dt \label{eq: Xi diff 3}\\
    &= \Tr[\beta\Xi_x^{\beta-1}(\partial_x\Xi_x)] \label{eq: Xi diff 4}.
\end{align}
From the above, Eq. \eqref{eq: Xi diff 2} is due to the cyclicity of the trace, whereas Eq. \eqref{eq: Xi diff 4} is due to the fact that 
\begin{equation}
    \int_0^\infty \frac{1}{(\Xi_x+t)^2}t^\beta dt = \frac{\pi\beta}{\sin(\pi\beta)}\Xi_x^{\beta-1}.
\end{equation}
Next, using the product rule $\partial (X\cdot Y\cdot Z) = (\partial X)\cdot Y\cdot Z+X\cdot(\partial Y)\cdot Z+X\cdot Y\cdot (\partial Z)$, we can then obtain the following derivative.
\begin{align}
    &\partial_\tau Q_\beta(\cG(\rho)||\cZ\circ\cG(\rho)) \\
    &= \left.\partial_x \Tr\left(\Xi_x^\beta\right)\right|_{x=0}\\
    &= \left.\Tr\left[\beta\Xi_x^{\beta-1}(\partial_x\Xi_x)\right]\right|_{x=0}\\
    &= \Tr\Bigl[\beta\Xi_x^{\beta-1}\bigl[\partial_x (\cZ\circ\cG(\rho_x)^\mu)\cG(\rho_x) \cZ\circ\cG(\rho_x)^{\mu}  + \cZ\circ\cG(\rho_x)^\mu \partial_x (\cG(\rho_x))\cZ\circ\cG(\rho_x)^\mu \label{eq: Xi derivative step} \\
    & \left.\quad +\cZ\circ\cG(\rho_x)^\mu\cG(\rho_x)\partial_x (\cZ\circ\cG(\rho_x)^\mu)\bigr]\Bigr]\right|_{x=0} \nonumber,
\end{align}
Again, from Lemma 1 of Ref \cite{rubboli2024new}, we see that
\begin{align}
    \partial_x (\cZ\circ\cG(\rho_x)^{\mu}) &= \frac{\sin(\pi\mu)}{\pi}\int_0^\infty (\cZ\circ\cG(\rho_x)+t)^{-1}\partial_x \left(\cZ\circ\cG(\rho_x)\right)(\cZ\circ\cG(\rho_x)+t)^{-1}t^\mu dt \\
    &= \frac{\sin(\pi \mu)}{\pi}\int_0^\infty \frac{\cZ\circ\cG(\tau-\rho)}{(\cZ\circ\cG(\rho_x) + t)^2} t^{\mu} dt.
\end{align}
The first term of Eq. \eqref{eq: Xi derivative step} are as follows.
\begin{align}
    &\Tr[\beta\Xi_x^{\beta-1}\partial_x(\cZ\circ\cG(\rho_x)^\mu) \cG(\rho_x)\cZ\circ\cG(\rho_x)^\mu]_{x=0}\\
    &=\Tr\left[\beta\Xi^{\beta-1}\frac{\sin(\pi\mu)}{\pi}\int_0^\infty (\cZ\circ\cG(\rho)+t)^{-1}\cZ\circ\cG(\tau-\rho)(\cZ\circ\cG(\rho)+t)^{-1} t^\mu dt \cdot \cG(\rho)\cZ\circ\cG(\rho)^\mu\right] \label{eq: sub1}\\
    &= \beta\frac{\sin(\pi\mu)}{\pi}\int_0^\infty\Tr[\Xi^{\beta-1}(\cZ\circ\cG(\rho)+t)^{-1}\cZ\circ\cG(\tau-\rho)(\cZ\circ\cG(\rho)+t)^{-1}\cG(\rho)\cZ\circ\cG(\rho)^\mu]t^\mu dt \label{eq: sub2}\\
    &= \Tr\Bigl[\cG^\dagger\circ\cZ\Bigl(\beta\frac{\sin(\pi\mu)}{\pi}\int_0^\infty (\cZ\circ\cG(\rho)+t)^{-1}\cG(\rho)\cZ\circ\cG(\rho)^\mu\Xi^{\beta-1}(\cZ\circ\cG(\rho)+t)^{-1}t^\mu dt\Bigr)(\tau-\rho)\Bigr] \label{eq: sub3}
\end{align}
Eq. \eqref{eq: sub2} is due to the fact that the integral and trace commute, while Eq. \eqref{eq: sub3} is obtained from the cyclicity of the trace. Next, the second term of Eq. \eqref{eq: Xi derivative step} is 
\begin{align}
    \Tr[\beta\Xi_x^{\beta-1}\cZ\circ\cG(\rho_x)^\mu\partial_x(\cG(\rho_x))\cZ\circ\cG(\rho_x)^\mu]_{x=0} &= \Tr[\beta\Xi^{\beta-1}\cZ\circ\cG(\rho)^\mu \cG(\tau-\rho)\cZ\circ\cG(\rho)^\mu]\\
    &= \Tr[\cG^\dagger \bigl(\beta\cZ\circ\cG(\rho)^\mu \Xi^{\beta-1}\cZ\circ\cG(\rho)^\mu\bigr)(\tau-\rho)]
\end{align}
Finally, the third term of Eq. \eqref{eq: Xi derivative step} is
\begin{align}
    &\Tr[\beta\Xi_x^{\beta-1}\cZ\circ\cG(\rho_x)^\mu \cG(\rho_x)\partial_x(\cZ\circ\cG(\rho_x)^\mu)]_{x=0}\\
    &=\Tr\left[\beta\Xi^{\beta-1}\frac{\sin(\pi\mu)}{\pi}\int_0^\infty \cZ\circ\cG(\rho)^\mu \cG(\rho)(\cZ\circ\cG(\rho)+t)^{-1}\cZ\circ\cG(\tau-\rho)(\cZ\circ\cG(\rho)+t)^{-1}t^\mu dt\right]\\
    &= \beta\frac{\sin(\pi\mu)}{\pi}\int_0^\infty\Tr\left[\Xi^{\beta-1}\cZ\circ\cG(\rho)^\mu \cG(\rho)(\cZ\circ\cG(\rho)+t)^{-1}\cZ\circ\cG(\tau-\rho)(\cZ\circ\cG(\rho)+t)^{-1}\right]t^\mu dt\\
    &= \Tr\Bigl[\cG^\dagger\circ\cZ\Bigl(\beta\frac{\sin(\pi\mu)}{\mu}\int_0^\infty(\cZ\circ\cG(\rho)+t)^{-1}\Xi^{\beta-1}\cZ\circ\cG(\rho)^\mu \cG(\rho)(\cZ\circ\cG(\rho)+t)^{-1}t^\mu dt\Bigr)(\tau-\rho)\Bigr]
\end{align}
This concludes the proof for Lemma \ref{lm: Q derivative}.
\end{proof}

\begin{theorem}\label{thm: gradient D} For states $\cG(\rho)$ and $\cZ\circ\cG(\rho)$, the gradient of the \Renyi divergence is given by
    \begin{equation}
    \nabla D_\beta\Bigl(\cG(\rho)\bigl|\bigr|\cZ\circ\cG(\rho)\Bigr) = 
        \frac{1}{(\beta-1)}\cG^\dagger\left[\frac{\chi_1 + \chi_2 + \chi_3}{Q_\beta (\cG(\rho)||\cZ\circ\cG(\rho))} - \frac{\II}{\Tr(\cG(\rho))}\right],
\end{equation}
where
\begin{align}
    \chi_1 &= \cZ\left(\beta \frac{\sin(\pi\mu)}{\pi}\int_0^\infty (\cZ\circ\cG(\rho)+t)^{-1}\cG(\rho)\cZ\circ\cG(\rho)^\mu \Xi^{\beta-1}(\cZ\circ\cG(\rho)+t)^{-1}t^\mu dt\right)\\
    \chi_2 &= \beta \cZ\circ\cG(\rho)^\mu \Xi^{\beta-1}\cZ\circ\cG(\rho)^\mu\\
    \chi_3 &= \cZ\left(\beta \frac{\sin(\pi\mu)}{\pi}\int_0^\infty (\cZ\circ\cG(\rho)+t)^{-1}\Xi^{\beta-1}\cZ\circ\cG(\rho)^\mu\cG(\rho)(\cZ\circ\cG(\rho)+t)^{-1}t^\mu dt\right).
\end{align}
\end{theorem}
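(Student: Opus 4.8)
The plan is to assemble the gradient from three ingredients already in place: the quotient-rule expression for the scalar derivative in Eq.~\eqref{eq: derivative of divergence wrt x quotient rule}, the directional derivative of $Q_\beta$ supplied by Lemma~\ref{lm: Q derivative}, and the defining relation for the gradient in Eq.~\eqref{eq: frechet Divergence}. Concretely, I would begin from Eq.~\eqref{eq: derivative of divergence wrt x quotient rule}, which already decomposes $\partial_x D_\beta(\cG(\rho_x)\|\cZ\circ\cG(\rho_x))|_{x=0}$ into a piece proportional to $\partial_x Q_\beta/Q_\beta$ and a piece proportional to $\partial_x\Tr(\cG(\rho_x))/\Tr(\cG(\rho))$, both evaluated at $x=0$.

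For the first piece I would substitute Lemma~\ref{lm: Q derivative} directly, using that $\partial_x Q_\beta|_{x=0}=\partial_\tau Q_\beta=\Tr[\cG^\dagger(\chi_1+\chi_2+\chi_3)(\tau-\rho)]$. For the second piece I would exploit linearity of $\cG$ together with $\partial_x\rho_x=\tau-\rho$ to obtain $\partial_x\Tr(\cG(\rho_x))|_{x=0}=\Tr(\cG(\tau-\rho))$, and then rewrite this through the adjoint relation $\Tr(A\,\cG(B))=\Tr(\cG^\dagger(A)\,B)$ (which follows from $\cG(\cdot)=\sum_s G_s(\cdot)G_s^\dagger$ and cyclicity of the trace) with $A=\II$, giving $\Tr(\cG^\dagger(\II)(\tau-\rho))$.

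Combining the two pieces, $\partial_\tau D_\beta$ collapses to a single trace of $(\tau-\rho)$ against $\tfrac{1}{\beta-1}\bigl[\cG^\dagger(\chi_1+\chi_2+\chi_3)/Q_\beta-\cG^\dagger(\II)/\Tr(\cG(\rho))\bigr]$; pulling $\cG^\dagger$ outside by linearity reproduces exactly the bracketed operator in the theorem statement. Matching against the defining identity Eq.~\eqref{eq: frechet Divergence}, $\partial_\tau D_\beta=\Tr[(\tau-\rho)\nabla D_\beta]$, and invoking that the direction $\tau$ is arbitrary, I would read off $\nabla D_\beta$ as claimed.

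There is no deep obstacle here: the genuine technical work, namely the Fréchet/Rubboli integral differentiation of $\Xi^\beta$ and the product-rule expansion of $\partial_x\Xi_x$, has already been discharged in Lemma~\ref{lm: Q derivative}. The only points demanding care are the bookkeeping of the adjoint relation so that every term is collected under a single $\cG^\dagger$, and the observation that the gradient should be treated as a Hermitian operator (symmetrizing $\chi_1+\chi_2+\chi_3$ if needed), which is what makes the readout from the trace pairing against arbitrary $\tau$ unambiguous and suitable for use in the Frank-Wolfe step of Algorithm~\ref{algo:min}.
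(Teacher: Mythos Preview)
Your proposal is correct and follows essentially the same approach as the paper: substitute Lemma~\ref{lm: Q derivative} and the identity $\partial_x\Tr(\cG(\rho_x))=\Tr[(\tau-\rho)\,\cG^\dagger(\II)]$ into the quotient-rule expression Eq.~\eqref{eq: derivative of divergence wrt x quotient rule}, then read off the gradient via the defining relation Eq.~\eqref{eq: frechet Divergence}. The paper's proof is slightly terser and does not explicitly discuss Hermiticity, but the logical content is the same.
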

\begin{proof} 
By substituting Lemma \ref{lm: Q derivative} into Eq.~\eqref{eq: derivative of divergence wrt x quotient rule}, and using the fact that $\partial_x \Tr(\cG(\rho_x)) = \Tr[(\tau-\rho)\cdot \cG^\dagger(\II)]$ gives
\begin{equation}
\begin{aligned}
    &\partial_x D_\beta\left(\cG(\rho_x) \| \cZ\circ\cG\left(\rho_x\right)\right)|_{x=0}, \\
    &= \Tr \left[(\tau-\rho)\cdot\frac{1}{\beta-1} \cG^\dagger\left(\frac{\chi_{1}+\chi_{2}+\chi_3}{Q_\beta\left(\cG(\rho) \| \cZ\circ\cG\left(\rho\right)\right)}-\frac{\II}{\Tr(\cG(\rho))}\right)\right]\\
    &\equiv \Tr\left[(\tau-\rho)\cdot\nabla D_\beta (\cG(\rho)\|\cZ\circ\cG(\rho_x))\right].
\end{aligned}
\end{equation}

This concludes our proof for Theorem \ref{thm: gradient D}.
\end{proof}
Finally, we are in the position to prove Theorem \ref{thm: gradient func}. Given that $f_\beta(\rho)=\Tr(\cG(\rho))\cdot D_\beta\bigl(\cG(\rho)\|\cZ\circ\cG(\rho)\bigr)$, 
using product rule and Theorem \ref{thm: gradient D}, we have
\begin{equation}
\begin{aligned}
    \partial_x f_\beta(\rho_x)|_{x=0} =& \left.\partial_x \Bigl(\Tr[\cG(\rho_x)]\Bigr)\cdot D_\beta\left(\cG(\rho_x) \| \cZ\circ\cG\left(\rho_x\right)\right) \right|_{x=0} + \Tr[\cG(\rho_x)] \cdot \left.\partial_x \Bigl(D_\beta\left(\cG(\rho_x) \| \cZ\circ\cG\left(\rho_x\right)\right)\Bigr) \right|_{x=0}\\
    =& \Tr[(\tau-\rho)\cG^\dagger(\II)]\cdot D_\beta(\cG(\rho)||\cZ\circ\cG(\rho)) + \Tr[\cG(\rho)]\cdot \Tr\bigl[(\tau-\rho)\cdot\nabla D_\beta(\cG(\rho)\|\cZ\circ\cg(\rho))\bigr].
\end{aligned}
\end{equation} 
We then use the fact that $D_\beta(\cG(\rho) \| \cZ\circ\cG\left(\rho\right))$ and $\Tr[\cG(\rho)]$ are scalars, hence 
\begin{align}
    \Tr\bigl[(\tau-\rho)\cG^\dagger(\II)\bigr]\cdot D_\beta(\cG(\rho) \| \cZ\circ\cG\left(\rho\right)) &= \Tr\Bigl\{(\tau-\rho)\cG^\dagger\bigl[\II\cdot D_\beta(\cG(\rho) \| \cZ\circ\cG\left(\rho\right))\bigr]\Bigr\} \\
    \Tr[\cG(\rho)]\cdot \Tr\bigl[(\tau-\rho)\cdot\nabla D_\beta(\cG(\rho)\| \cZ\circ\cG\left(\rho\right))\bigr] &= \Tr\bigl[(\tau-\rho)\Tr[\cG(\rho)]\cdot \nabla D_\beta(\cG(\rho) \| \cZ\circ\cG\left(\rho\right))\bigr].
\end{align}
This gives us the resulting relation
\begin{align}
    \partial_x f_\beta(\rho_x)|_{x=0} &= \Tr\biggl\{(\tau-\rho)\Bigl[\cG^\dagger\bigl[\II\cdot D_\beta(\cG(\rho) \| \cZ\circ\cG\left(\rho\right))\bigr]+ \Tr[\cG(\rho)] \cdot \nabla D_\beta (\cG(\rho) \| \cZ\circ\cG\left(\rho\right))\Bigr]\biggr\} \\
    &\equiv \Tr\bigl[(\tau-\rho)\nabla f_\beta(\rho)\bigr].
\end{align}
This concludes our proof for Theorem \ref{thm: gradient func}.

\vspace{-0.3cm}
\section{SDP Formulation with Finite-Size Constraints}\label{appendix:constraint}
In this section we formulate the SDP problem based on Eqs.~\eqref{eq: step 1 optimization} and ~\eqref{eq: step 1 set} based on Ref \cite{george2021numerical}. The primal problem is as such:
\begin{equation}
    \begin{aligned} 
    \min \quad & \langle\nabla f(\rho), \sigma\rangle \\ 
    \text {s.t.} \quad & \operatorname{Tr}\left(\Gamma_i \sigma\right)=\gamma_i \quad \forall i \in \Sigma, \\ & \operatorname{Tr}\left(A\right)+\operatorname{Tr}\left(B\right) \leqslant \mu, \\ 
    & A \succeq \Phi_{\mathcal{P}}(\sigma)-\bar{F}, \\ 
    & B \succeq-\left[\Phi_{\mathcal{P}}(\sigma)-\bar{F}\right]\\
    & \sigma, A,B \geq 0.
    \end{aligned}\label{eq: primal sdp}
\end{equation}
The corresponding dual problem is
\begin{equation}
    \begin{aligned} 
    \max \quad & \Vec{\gamma}\cdot \Vec{y}+\bar{f}\cdot \bar{z}-a\mu \\ 
    \text {s.t.} \quad & \sum_iy_i\Gamma_i + \sum_jz_j\tilde{\Gamma}_j\leq f(\rho) \\ & -a\II\leq \Vec{z}\leq a\II\\ 
    & a \geq 0, \Vec{y}\in \RR^{|\Sigma|},
    \end{aligned}\label{eq: dual sdp}
\end{equation}
where $\bar{f}$ is the vector version of $\bar{F}$.


\begin{thebibliography}{49}%
	\makeatletter
	\providecommand \@ifxundefined [1]{%
		\@ifx{#1\undefined}
	}%
	\providecommand \@ifnum [1]{%
		\ifnum #1\expandafter \@firstoftwo
		\else \expandafter \@secondoftwo
		\fi
	}%
	\providecommand \@ifx [1]{%
		\ifx #1\expandafter \@firstoftwo
		\else \expandafter \@secondoftwo
		\fi
	}%
	\providecommand \natexlab [1]{#1}%
	\providecommand \enquote  [1]{``#1''}%
	\providecommand \bibnamefont  [1]{#1}%
	\providecommand \bibfnamefont [1]{#1}%
	\providecommand \citenamefont [1]{#1}%
	\providecommand \href@noop [0]{\@secondoftwo}%
	\providecommand \href [0]{\begingroup \@sanitize@url \@href}%
	\providecommand \@href[1]{\@@startlink{#1}\@@href}%
	\providecommand \@@href[1]{\endgroup#1\@@endlink}%
	\providecommand \@sanitize@url [0]{\catcode `\\12\catcode `\$12\catcode
		`\&12\catcode `\#12\catcode `\^12\catcode `\_12\catcode `\%12\relax}%
	\providecommand \@@startlink[1]{}%
	\providecommand \@@endlink[0]{}%
	\providecommand \url  [0]{\begingroup\@sanitize@url \@url }%
	\providecommand \@url [1]{\endgroup\@href {#1}{\urlprefix }}%
	\providecommand \urlprefix  [0]{URL }%
	\providecommand \Eprint [0]{\href }%
	\providecommand \doibase [0]{https://doi.org/}%
	\providecommand \selectlanguage [0]{\@gobble}%
	\providecommand \bibinfo  [0]{\@secondoftwo}%
	\providecommand \bibfield  [0]{\@secondoftwo}%
	\providecommand \translation [1]{[#1]}%
	\providecommand \BibitemOpen [0]{}%
	\providecommand \bibitemStop [0]{}%
	\providecommand \bibitemNoStop [0]{.\EOS\space}%
	\providecommand \EOS [0]{\spacefactor3000\relax}%
	\providecommand \BibitemShut  [1]{\csname bibitem#1\endcsname}%
	\let\auto@bib@innerbib\@empty
	\bibitem [{\citenamefont {Bennett}\ and\ \citenamefont
		{Brassard}(2014)}]{BENNETT20147}%
	\BibitemOpen
	\bibfield  {author} {\bibinfo {author} {\bibfnamefont {C.~H.}\ \bibnamefont
			{Bennett}}\ and\ \bibinfo {author} {\bibfnamefont {G.}~\bibnamefont
			{Brassard}},\ }\bibfield  {title} {\bibinfo {title} {Quantum cryptography:
			Public key distribution and coin tossing},\ }\href
	{https://doi.org/https://doi.org/10.1016/j.tcs.2014.05.025} {\bibfield
		{journal} {\bibinfo  {journal} {Theoretical Computer Science}\ }\textbf
		{\bibinfo {volume} {560}},\ \bibinfo {pages} {7} (\bibinfo {year} {2014})},\
	\bibinfo {note} {theoretical Aspects of Quantum Cryptography – celebrating
		30 years of BB84}\BibitemShut {NoStop}%
	\bibitem [{\citenamefont {Ekert}(1991)}]{ekert1991quantum}%
	\BibitemOpen
	\bibfield  {author} {\bibinfo {author} {\bibfnamefont {A.~K.}\ \bibnamefont
			{Ekert}},\ }\bibfield  {title} {\bibinfo {title} {Quantum cryptography based
			on bell’s theorem},\ }\href@noop {} {\bibfield  {journal} {\bibinfo
			{journal} {Physical review letters}\ }\textbf {\bibinfo {volume} {67}},\
		\bibinfo {pages} {661} (\bibinfo {year} {1991})}\BibitemShut {NoStop}%
	\bibitem [{\citenamefont {Wolf}(2021)}]{wolf2021quantum}%
	\BibitemOpen
	\bibfield  {author} {\bibinfo {author} {\bibfnamefont {R.}~\bibnamefont
			{Wolf}},\ }\bibfield  {title} {\bibinfo {title} {Quantum key distribution},\
	}\href@noop {} {\bibfield  {journal} {\bibinfo  {journal} {Lecture notes in
				physics}\ }\textbf {\bibinfo {volume} {988}} (\bibinfo {year}
		{2021})}\BibitemShut {NoStop}%
	\bibitem [{\citenamefont {Diamanti}\ \emph {et~al.}(2016)\citenamefont
		{Diamanti}, \citenamefont {Lo}, \citenamefont {Qi},\ and\ \citenamefont
		{Yuan}}]{diamanti2016practical}%
	\BibitemOpen
	\bibfield  {author} {\bibinfo {author} {\bibfnamefont {E.}~\bibnamefont
			{Diamanti}}, \bibinfo {author} {\bibfnamefont {H.-K.}\ \bibnamefont {Lo}},
		\bibinfo {author} {\bibfnamefont {B.}~\bibnamefont {Qi}},\ and\ \bibinfo
		{author} {\bibfnamefont {Z.}~\bibnamefont {Yuan}},\ }\bibfield  {title}
	{\bibinfo {title} {Practical challenges in quantum key distribution},\
	}\href@noop {} {\bibfield  {journal} {\bibinfo  {journal} {npj Quantum
				Information}\ }\textbf {\bibinfo {volume} {2}},\ \bibinfo {pages} {1}
		(\bibinfo {year} {2016})}\BibitemShut {NoStop}%
	\bibitem [{\citenamefont {Xu}\ \emph {et~al.}(2020)\citenamefont {Xu},
		\citenamefont {Ma}, \citenamefont {Zhang}, \citenamefont {Lo},\ and\
		\citenamefont {Pan}}]{xu2020secure}%
	\BibitemOpen
	\bibfield  {author} {\bibinfo {author} {\bibfnamefont {F.}~\bibnamefont
			{Xu}}, \bibinfo {author} {\bibfnamefont {X.}~\bibnamefont {Ma}}, \bibinfo
		{author} {\bibfnamefont {Q.}~\bibnamefont {Zhang}}, \bibinfo {author}
		{\bibfnamefont {H.-K.}\ \bibnamefont {Lo}},\ and\ \bibinfo {author}
		{\bibfnamefont {J.-W.}\ \bibnamefont {Pan}},\ }\bibfield  {title} {\bibinfo
		{title} {Secure quantum key distribution with realistic devices},\
	}\href@noop {} {\bibfield  {journal} {\bibinfo  {journal} {Reviews of modern
				physics}\ }\textbf {\bibinfo {volume} {92}},\ \bibinfo {pages} {025002}
		(\bibinfo {year} {2020})}\BibitemShut {NoStop}%
	\bibitem [{\citenamefont {Pirandola}\ \emph {et~al.}(2020)\citenamefont
		{Pirandola}, \citenamefont {Andersen}, \citenamefont {Banchi}, \citenamefont
		{Berta}, \citenamefont {Bunandar}, \citenamefont {Colbeck}, \citenamefont
		{Englund}, \citenamefont {Gehring}, \citenamefont {Lupo}, \citenamefont
		{Ottaviani} \emph {et~al.}}]{pirandola2020advances}%
	\BibitemOpen
	\bibfield  {author} {\bibinfo {author} {\bibfnamefont {S.}~\bibnamefont
			{Pirandola}}, \bibinfo {author} {\bibfnamefont {U.~L.}\ \bibnamefont
			{Andersen}}, \bibinfo {author} {\bibfnamefont {L.}~\bibnamefont {Banchi}},
		\bibinfo {author} {\bibfnamefont {M.}~\bibnamefont {Berta}}, \bibinfo
		{author} {\bibfnamefont {D.}~\bibnamefont {Bunandar}}, \bibinfo {author}
		{\bibfnamefont {R.}~\bibnamefont {Colbeck}}, \bibinfo {author} {\bibfnamefont
			{D.}~\bibnamefont {Englund}}, \bibinfo {author} {\bibfnamefont
			{T.}~\bibnamefont {Gehring}}, \bibinfo {author} {\bibfnamefont
			{C.}~\bibnamefont {Lupo}}, \bibinfo {author} {\bibfnamefont {C.}~\bibnamefont
			{Ottaviani}}, \emph {et~al.},\ }\bibfield  {title} {\bibinfo {title}
		{Advances in quantum cryptography},\ }\href@noop {} {\bibfield  {journal}
		{\bibinfo  {journal} {Advances in optics and photonics}\ }\textbf {\bibinfo
			{volume} {12}},\ \bibinfo {pages} {1012} (\bibinfo {year}
		{2020})}\BibitemShut {NoStop}%
	\bibitem [{\citenamefont {Renner}(2008)}]{renner2008security}%
	\BibitemOpen
	\bibfield  {author} {\bibinfo {author} {\bibfnamefont {R.}~\bibnamefont
			{Renner}},\ }\bibfield  {title} {\bibinfo {title} {Security of quantum key
			distribution},\ }\href@noop {} {\bibfield  {journal} {\bibinfo  {journal}
			{International Journal of Quantum Information}\ }\textbf {\bibinfo {volume}
			{6}},\ \bibinfo {pages} {1} (\bibinfo {year} {2008})}\BibitemShut {NoStop}%
	\bibitem [{\citenamefont {Tomamichel}\ \emph {et~al.}(2011)\citenamefont
		{Tomamichel}, \citenamefont {Schaffner}, \citenamefont {Smith},\ and\
		\citenamefont {Renner}}]{tomamichel2011leftover}%
	\BibitemOpen
	\bibfield  {author} {\bibinfo {author} {\bibfnamefont {M.}~\bibnamefont
			{Tomamichel}}, \bibinfo {author} {\bibfnamefont {C.}~\bibnamefont
			{Schaffner}}, \bibinfo {author} {\bibfnamefont {A.}~\bibnamefont {Smith}},\
		and\ \bibinfo {author} {\bibfnamefont {R.}~\bibnamefont {Renner}},\
	}\bibfield  {title} {\bibinfo {title} {Leftover hashing against quantum side
			information},\ }\href@noop {} {\bibfield  {journal} {\bibinfo  {journal}
			{IEEE Transactions on Information Theory}\ }\textbf {\bibinfo {volume}
			{57}},\ \bibinfo {pages} {5524} (\bibinfo {year} {2011})}\BibitemShut
	{NoStop}%
	\bibitem [{\citenamefont {Tomamichel}(2012)}]{tomamichel2012framework}%
	\BibitemOpen
	\bibfield  {author} {\bibinfo {author} {\bibfnamefont {M.}~\bibnamefont
			{Tomamichel}},\ }\bibfield  {title} {\bibinfo {title} {A framework for
			non-asymptotic quantum information theory},\ }\href@noop {} {\bibfield
		{journal} {\bibinfo  {journal} {arXiv preprint arXiv:1203.2142}\ } (\bibinfo
		{year} {2012})}\BibitemShut {NoStop}%
	\bibitem [{\citenamefont {Devetak}\ and\ \citenamefont
		{Winter}(2005)}]{devetak2005distillation}%
	\BibitemOpen
	\bibfield  {author} {\bibinfo {author} {\bibfnamefont {I.}~\bibnamefont
			{Devetak}}\ and\ \bibinfo {author} {\bibfnamefont {A.}~\bibnamefont
			{Winter}},\ }\bibfield  {title} {\bibinfo {title} {Distillation of secret key
			and entanglement from quantum states},\ }\href@noop {} {\bibfield  {journal}
		{\bibinfo  {journal} {Proceedings of the Royal Society A: Mathematical,
				Physical and engineering sciences}\ }\textbf {\bibinfo {volume} {461}},\
		\bibinfo {pages} {207} (\bibinfo {year} {2005})}\BibitemShut {NoStop}%
	\bibitem [{\citenamefont {Berta}\ \emph {et~al.}(2010)\citenamefont {Berta},
		\citenamefont {Christandl}, \citenamefont {Colbeck}, \citenamefont {Renes},\
		and\ \citenamefont {Renner}}]{berta2010uncertainty}%
	\BibitemOpen
	\bibfield  {author} {\bibinfo {author} {\bibfnamefont {M.}~\bibnamefont
			{Berta}}, \bibinfo {author} {\bibfnamefont {M.}~\bibnamefont {Christandl}},
		\bibinfo {author} {\bibfnamefont {R.}~\bibnamefont {Colbeck}}, \bibinfo
		{author} {\bibfnamefont {J.~M.}\ \bibnamefont {Renes}},\ and\ \bibinfo
		{author} {\bibfnamefont {R.}~\bibnamefont {Renner}},\ }\bibfield  {title}
	{\bibinfo {title} {The uncertainty principle in the presence of quantum
			memory},\ }\href@noop {} {\bibfield  {journal} {\bibinfo  {journal} {Nature
				Physics}\ }\textbf {\bibinfo {volume} {6}},\ \bibinfo {pages} {659} (\bibinfo
		{year} {2010})}\BibitemShut {NoStop}%
	\bibitem [{\citenamefont {Dupuis}\ \emph {et~al.}(2020)\citenamefont {Dupuis},
		\citenamefont {Fawzi},\ and\ \citenamefont {Renner}}]{dupuis2020entropy}%
	\BibitemOpen
	\bibfield  {author} {\bibinfo {author} {\bibfnamefont {F.}~\bibnamefont
			{Dupuis}}, \bibinfo {author} {\bibfnamefont {O.}~\bibnamefont {Fawzi}},\ and\
		\bibinfo {author} {\bibfnamefont {R.}~\bibnamefont {Renner}},\ }\bibfield
	{title} {\bibinfo {title} {Entropy accumulation},\ }\href@noop {} {\bibfield
		{journal} {\bibinfo  {journal} {Communications in Mathematical Physics}\
		}\textbf {\bibinfo {volume} {379}},\ \bibinfo {pages} {867} (\bibinfo {year}
		{2020})}\BibitemShut {NoStop}%
	\bibitem [{\citenamefont {George}\ \emph {et~al.}(2022)\citenamefont {George},
		\citenamefont {Lin}, \citenamefont {van Himbeeck}, \citenamefont {Fang},\
		and\ \citenamefont {L{\"u}tkenhaus}}]{george2022finite}%
	\BibitemOpen
	\bibfield  {author} {\bibinfo {author} {\bibfnamefont {I.}~\bibnamefont
			{George}}, \bibinfo {author} {\bibfnamefont {J.}~\bibnamefont {Lin}},
		\bibinfo {author} {\bibfnamefont {T.}~\bibnamefont {van Himbeeck}}, \bibinfo
		{author} {\bibfnamefont {K.}~\bibnamefont {Fang}},\ and\ \bibinfo {author}
		{\bibfnamefont {N.}~\bibnamefont {L{\"u}tkenhaus}},\ }\bibfield  {title}
	{\bibinfo {title} {Finite-key analysis of quantum key distribution with
			characterized devices using entropy accumulation},\ }\href@noop {} {\bibfield
		{journal} {\bibinfo  {journal} {arXiv preprint arXiv:2203.06554}\ }
		(\bibinfo {year} {2022})}\BibitemShut {NoStop}%
	\bibitem [{\citenamefont {Metger}\ and\ \citenamefont
		{Renner}(2023)}]{metger2023security}%
	\BibitemOpen
	\bibfield  {author} {\bibinfo {author} {\bibfnamefont {T.}~\bibnamefont
			{Metger}}\ and\ \bibinfo {author} {\bibfnamefont {R.}~\bibnamefont
			{Renner}},\ }\bibfield  {title} {\bibinfo {title} {Security of quantum key
			distribution from generalised entropy accumulation},\ }\href@noop {}
	{\bibfield  {journal} {\bibinfo  {journal} {Nature Communications}\ }\textbf
		{\bibinfo {volume} {14}},\ \bibinfo {pages} {5272} (\bibinfo {year}
		{2023})}\BibitemShut {NoStop}%
	\bibitem [{\citenamefont {Metger}\ \emph {et~al.}(2024)\citenamefont {Metger},
		\citenamefont {Fawzi}, \citenamefont {Sutter},\ and\ \citenamefont
		{Renner}}]{metger2024generalised}%
	\BibitemOpen
	\bibfield  {author} {\bibinfo {author} {\bibfnamefont {T.}~\bibnamefont
			{Metger}}, \bibinfo {author} {\bibfnamefont {O.}~\bibnamefont {Fawzi}},
		\bibinfo {author} {\bibfnamefont {D.}~\bibnamefont {Sutter}},\ and\ \bibinfo
		{author} {\bibfnamefont {R.}~\bibnamefont {Renner}},\ }\bibfield  {title}
	{\bibinfo {title} {Generalised entropy accumulation},\ }\href@noop {}
	{\bibfield  {journal} {\bibinfo  {journal} {Communications in Mathematical
				Physics}\ }\textbf {\bibinfo {volume} {405}},\ \bibinfo {pages} {261}
		(\bibinfo {year} {2024})}\BibitemShut {NoStop}%
	\bibitem [{\citenamefont {Christandl}\ \emph {et~al.}(2009)\citenamefont
		{Christandl}, \citenamefont {K{\"o}nig},\ and\ \citenamefont
		{Renner}}]{christandl2009postselection}%
	\BibitemOpen
	\bibfield  {author} {\bibinfo {author} {\bibfnamefont {M.}~\bibnamefont
			{Christandl}}, \bibinfo {author} {\bibfnamefont {R.}~\bibnamefont
			{K{\"o}nig}},\ and\ \bibinfo {author} {\bibfnamefont {R.}~\bibnamefont
			{Renner}},\ }\bibfield  {title} {\bibinfo {title} {Postselection technique
			for quantum channels with applications to quantum cryptography},\ }\href@noop
	{} {\bibfield  {journal} {\bibinfo  {journal} {Physical review letters}\
		}\textbf {\bibinfo {volume} {102}},\ \bibinfo {pages} {020504} (\bibinfo
		{year} {2009})}\BibitemShut {NoStop}%
	\bibitem [{\citenamefont {Gottesman}\ \emph {et~al.}(2004)\citenamefont
		{Gottesman}, \citenamefont {Lo}, \citenamefont {Lutkenhaus},\ and\
		\citenamefont {Preskill}}]{gottesman2004security}%
	\BibitemOpen
	\bibfield  {author} {\bibinfo {author} {\bibfnamefont {D.}~\bibnamefont
			{Gottesman}}, \bibinfo {author} {\bibfnamefont {H.-K.}\ \bibnamefont {Lo}},
		\bibinfo {author} {\bibfnamefont {N.}~\bibnamefont {Lutkenhaus}},\ and\
		\bibinfo {author} {\bibfnamefont {J.}~\bibnamefont {Preskill}},\ }\bibfield
	{title} {\bibinfo {title} {Security of quantum key distribution with
			imperfect devices},\ }in\ \href {https://doi.org/10.1109/ISIT.2004.1365172}
	{\emph {\bibinfo {booktitle} {International Symposium on Information Theory,
				2004. ISIT 2004. Proceedings.}}}\ (\bibinfo {year} {2004})\ p.\ \bibinfo
	{pages} {136}\BibitemShut {NoStop}%
	\bibitem [{\citenamefont {Scarani}\ \emph {et~al.}(2009)\citenamefont
		{Scarani}, \citenamefont {Bechmann-Pasquinucci}, \citenamefont {Cerf},
		\citenamefont {Du{\v{s}}ek}, \citenamefont {L{\"u}tkenhaus},\ and\
		\citenamefont {Peev}}]{scarani2009practical}%
	\BibitemOpen
	\bibfield  {author} {\bibinfo {author} {\bibfnamefont {V.}~\bibnamefont
			{Scarani}}, \bibinfo {author} {\bibfnamefont {H.}~\bibnamefont
			{Bechmann-Pasquinucci}}, \bibinfo {author} {\bibfnamefont {N.~J.}\
			\bibnamefont {Cerf}}, \bibinfo {author} {\bibfnamefont {M.}~\bibnamefont
			{Du{\v{s}}ek}}, \bibinfo {author} {\bibfnamefont {N.}~\bibnamefont
			{L{\"u}tkenhaus}},\ and\ \bibinfo {author} {\bibfnamefont {M.}~\bibnamefont
			{Peev}},\ }\bibfield  {title} {\bibinfo {title} {The security of practical
			quantum key distribution},\ }\href@noop {} {\bibfield  {journal} {\bibinfo
			{journal} {Reviews of modern physics}\ }\textbf {\bibinfo {volume} {81}},\
		\bibinfo {pages} {1301} (\bibinfo {year} {2009})}\BibitemShut {NoStop}%
	\bibitem [{\citenamefont {Winick}\ \emph {et~al.}(2018)\citenamefont {Winick},
		\citenamefont {L{\"u}tkenhaus},\ and\ \citenamefont
		{Coles}}]{winick2018reliable}%
	\BibitemOpen
	\bibfield  {author} {\bibinfo {author} {\bibfnamefont {A.}~\bibnamefont
			{Winick}}, \bibinfo {author} {\bibfnamefont {N.}~\bibnamefont
			{L{\"u}tkenhaus}},\ and\ \bibinfo {author} {\bibfnamefont {P.~J.}\
			\bibnamefont {Coles}},\ }\bibfield  {title} {\bibinfo {title} {Reliable
			numerical key rates for quantum key distribution},\ }\href@noop {} {\bibfield
		{journal} {\bibinfo  {journal} {Quantum}\ }\textbf {\bibinfo {volume} {2}},\
		\bibinfo {pages} {77} (\bibinfo {year} {2018})}\BibitemShut {NoStop}%
	\bibitem [{\citenamefont {George}\ \emph {et~al.}(2021)\citenamefont {George},
		\citenamefont {Lin},\ and\ \citenamefont
		{L{\"u}tkenhaus}}]{george2021numerical}%
	\BibitemOpen
	\bibfield  {author} {\bibinfo {author} {\bibfnamefont {I.}~\bibnamefont
			{George}}, \bibinfo {author} {\bibfnamefont {J.}~\bibnamefont {Lin}},\ and\
		\bibinfo {author} {\bibfnamefont {N.}~\bibnamefont {L{\"u}tkenhaus}},\
	}\bibfield  {title} {\bibinfo {title} {Numerical calculations of the finite
			key rate for general quantum key distribution protocols},\ }\href@noop {}
	{\bibfield  {journal} {\bibinfo  {journal} {Physical Review Research}\
		}\textbf {\bibinfo {volume} {3}},\ \bibinfo {pages} {013274} (\bibinfo {year}
		{2021})}\BibitemShut {NoStop}%
	\bibitem [{\citenamefont {Dupuis}(2023)}]{dupuis2023privacy}%
	\BibitemOpen
	\bibfield  {author} {\bibinfo {author} {\bibfnamefont {F.}~\bibnamefont
			{Dupuis}},\ }\bibfield  {title} {\bibinfo {title} {Privacy amplification and
			decoupling without smoothing},\ }\href@noop {} {\bibfield  {journal}
		{\bibinfo  {journal} {IEEE Transactions on Information Theory}\ } (\bibinfo
		{year} {2023})}\BibitemShut {NoStop}%
	\bibitem [{\citenamefont {Tupkary}\ \emph {et~al.}(2024)\citenamefont
		{Tupkary}, \citenamefont {Tan},\ and\ \citenamefont
		{L{\"u}tkenhaus}}]{tupkary2024security}%
	\BibitemOpen
	\bibfield  {author} {\bibinfo {author} {\bibfnamefont {D.}~\bibnamefont
			{Tupkary}}, \bibinfo {author} {\bibfnamefont {E.~Y.-Z.}\ \bibnamefont
			{Tan}},\ and\ \bibinfo {author} {\bibfnamefont {N.}~\bibnamefont
			{L{\"u}tkenhaus}},\ }\bibfield  {title} {\bibinfo {title} {Security proof for
			variable-length quantum key distribution},\ }\href@noop {} {\bibfield
		{journal} {\bibinfo  {journal} {Physical Review Research}\ }\textbf {\bibinfo
			{volume} {6}},\ \bibinfo {pages} {023002} (\bibinfo {year}
		{2024})}\BibitemShut {NoStop}%
	\bibitem [{\citenamefont {Arqand}\ \emph {et~al.}(2024)\citenamefont {Arqand},
		\citenamefont {Hahn},\ and\ \citenamefont {Tan}}]{arqand2024generalized}%
	\BibitemOpen
	\bibfield  {author} {\bibinfo {author} {\bibfnamefont {A.}~\bibnamefont
			{Arqand}}, \bibinfo {author} {\bibfnamefont {T.~A.}\ \bibnamefont {Hahn}},\
		and\ \bibinfo {author} {\bibfnamefont {E.~Y.-Z.}\ \bibnamefont {Tan}},\
	}\bibfield  {title} {\bibinfo {title} {Generalized {R}{\'e}nyi entropy
			accumulation theorem and generalized quantum probability estimation},\
	}\href@noop {} {\bibfield  {journal} {\bibinfo  {journal} {arXiv preprint
				arXiv:2405.05912}\ } (\bibinfo {year} {2024})}\BibitemShut {NoStop}%
	\bibitem [{\citenamefont {Kamin}\ \emph {et~al.}(2024)\citenamefont {Kamin},
		\citenamefont {Arqand}, \citenamefont {George}, \citenamefont
		{L{\"u}tkenhaus},\ and\ \citenamefont {Tan}}]{kamin2024finite}%
	\BibitemOpen
	\bibfield  {author} {\bibinfo {author} {\bibfnamefont {L.}~\bibnamefont
			{Kamin}}, \bibinfo {author} {\bibfnamefont {A.}~\bibnamefont {Arqand}},
		\bibinfo {author} {\bibfnamefont {I.}~\bibnamefont {George}}, \bibinfo
		{author} {\bibfnamefont {N.}~\bibnamefont {L{\"u}tkenhaus}},\ and\ \bibinfo
		{author} {\bibfnamefont {E.~Y.-Z.}\ \bibnamefont {Tan}},\ }\bibfield  {title}
	{\bibinfo {title} {Finite-size analysis of prepare-and-measure and
			decoy-state qkd via entropy accumulation},\ }\href@noop {} {\bibfield
		{journal} {\bibinfo  {journal} {arXiv preprint arXiv:2406.10198}\ } (\bibinfo
		{year} {2024})}\BibitemShut {NoStop}%
	\bibitem [{\citenamefont {Nahar}\ \emph {et~al.}(2024)\citenamefont {Nahar},
		\citenamefont {Tupkary}, \citenamefont {Zhao}, \citenamefont
		{L{\"u}tkenhaus},\ and\ \citenamefont {Tan}}]{nahar2024postselection}%
	\BibitemOpen
	\bibfield  {author} {\bibinfo {author} {\bibfnamefont {S.}~\bibnamefont
			{Nahar}}, \bibinfo {author} {\bibfnamefont {D.}~\bibnamefont {Tupkary}},
		\bibinfo {author} {\bibfnamefont {Y.}~\bibnamefont {Zhao}}, \bibinfo {author}
		{\bibfnamefont {N.}~\bibnamefont {L{\"u}tkenhaus}},\ and\ \bibinfo {author}
		{\bibfnamefont {E.~Y.-Z.}\ \bibnamefont {Tan}},\ }\bibfield  {title}
	{\bibinfo {title} {Postselection technique for optical quantum key
			distribution with improved de finetti reductions},\ }\href@noop {} {\bibfield
		{journal} {\bibinfo  {journal} {PRX Quantum}\ }\textbf {\bibinfo {volume}
			{5}},\ \bibinfo {pages} {040315} (\bibinfo {year} {2024})}\BibitemShut
	{NoStop}%
	\bibitem [{\citenamefont {Scarani}\ and\ \citenamefont
		{Renner}(2008)}]{scarani2008finite}%
	\BibitemOpen
	\bibfield  {author} {\bibinfo {author} {\bibfnamefont {V.}~\bibnamefont
			{Scarani}}\ and\ \bibinfo {author} {\bibfnamefont {R.}~\bibnamefont
			{Renner}},\ }\bibfield  {title} {\bibinfo {title} {Security bounds for
			quantum cryptography with finite resources},\ }in\ \href@noop {} {\emph
		{\bibinfo {booktitle} {Workshop on Quantum Computation, Communication, and
				Cryptography}}}\ (\bibinfo {organization} {Springer},\ \bibinfo {year}
	{2008})\ pp.\ \bibinfo {pages} {83--95}\BibitemShut {NoStop}%
	\bibitem [{\citenamefont {Frank}\ \emph {et~al.}(1956)\citenamefont {Frank},
		\citenamefont {Wolfe} \emph {et~al.}}]{frank1956algorithm}%
	\BibitemOpen
	\bibfield  {author} {\bibinfo {author} {\bibfnamefont {M.}~\bibnamefont
			{Frank}}, \bibinfo {author} {\bibfnamefont {P.}~\bibnamefont {Wolfe}}, \emph
		{et~al.},\ }\bibfield  {title} {\bibinfo {title} {An algorithm for quadratic
			programming},\ }\href@noop {} {\bibfield  {journal} {\bibinfo  {journal}
			{Naval research logistics quarterly}\ }\textbf {\bibinfo {volume} {3}},\
		\bibinfo {pages} {95} (\bibinfo {year} {1956})}\BibitemShut {NoStop}%
	\bibitem [{\citenamefont {Burniston}\ \emph {et~al.}(2024)\citenamefont
		{Burniston}, \citenamefont {Wang}, \citenamefont {Kamin} \emph
		{et~al.}}]{burniston_2024_14262569}%
	\BibitemOpen
	\bibfield  {author} {\bibinfo {author} {\bibfnamefont {J.}~\bibnamefont
			{Burniston}}, \bibinfo {author} {\bibfnamefont {W.}~\bibnamefont {Wang}},
		\bibinfo {author} {\bibfnamefont {L.}~\bibnamefont {Kamin}}, \emph {et~al.},\
	}\href {https://doi.org/10.5281/zenodo.14262569} {\bibinfo {title} {Open
			{QKD} {S}ecurity: {V}ersion 2.0.2}},\ \bibinfo {howpublished}
	{\url{https://github.com/Optical-Quantum-Communication-Theory/openQKDsecurity}}
	(\bibinfo {year} {2024})\BibitemShut {NoStop}%
	\bibitem [{\citenamefont {Bennett}\ \emph {et~al.}(1992)\citenamefont
		{Bennett}, \citenamefont {Brassard},\ and\ \citenamefont
		{Mermin}}]{bennett1992quantum}%
	\BibitemOpen
	\bibfield  {author} {\bibinfo {author} {\bibfnamefont {C.~H.}\ \bibnamefont
			{Bennett}}, \bibinfo {author} {\bibfnamefont {G.}~\bibnamefont {Brassard}},\
		and\ \bibinfo {author} {\bibfnamefont {N.~D.}\ \bibnamefont {Mermin}},\
	}\bibfield  {title} {\bibinfo {title} {Quantum cryptography without bell’s
			theorem},\ }\href@noop {} {\bibfield  {journal} {\bibinfo  {journal}
			{Physical review letters}\ }\textbf {\bibinfo {volume} {68}},\ \bibinfo
		{pages} {557} (\bibinfo {year} {1992})}\BibitemShut {NoStop}%
	\bibitem [{\citenamefont {Curty}\ \emph {et~al.}(2004)\citenamefont {Curty},
		\citenamefont {Lewenstein},\ and\ \citenamefont
		{L{\"u}tkenhaus}}]{curty2004entanglement}%
	\BibitemOpen
	\bibfield  {author} {\bibinfo {author} {\bibfnamefont {M.}~\bibnamefont
			{Curty}}, \bibinfo {author} {\bibfnamefont {M.}~\bibnamefont {Lewenstein}},\
		and\ \bibinfo {author} {\bibfnamefont {N.}~\bibnamefont {L{\"u}tkenhaus}},\
	}\bibfield  {title} {\bibinfo {title} {Entanglement as a precondition for
			secure quantum key distribution},\ }\href@noop {} {\bibfield  {journal}
		{\bibinfo  {journal} {Physical review letters}\ }\textbf {\bibinfo {volume}
			{92}},\ \bibinfo {pages} {217903} (\bibinfo {year} {2004})}\BibitemShut
	{NoStop}%
	\bibitem [{\citenamefont {Ferenczi}\ and\ \citenamefont
		{L{\"u}tkenhaus}(2012)}]{ferenczi2012symmetries}%
	\BibitemOpen
	\bibfield  {author} {\bibinfo {author} {\bibfnamefont {A.}~\bibnamefont
			{Ferenczi}}\ and\ \bibinfo {author} {\bibfnamefont {N.}~\bibnamefont
			{L{\"u}tkenhaus}},\ }\bibfield  {title} {\bibinfo {title} {Symmetries in
			quantum key distribution and the connection between optimal attacks and
			optimal cloning},\ }\href@noop {} {\bibfield  {journal} {\bibinfo  {journal}
			{Physical Review A—Atomic, Molecular, and Optical Physics}\ }\textbf
		{\bibinfo {volume} {85}},\ \bibinfo {pages} {052310} (\bibinfo {year}
		{2012})}\BibitemShut {NoStop}%
	\bibitem [{\citenamefont {Tomamichel}(2015)}]{tomamichel2015quantum}%
	\BibitemOpen
	\bibfield  {author} {\bibinfo {author} {\bibfnamefont {M.}~\bibnamefont
			{Tomamichel}},\ }\href@noop {} {\emph {\bibinfo {title} {Quantum information
				processing with finite resources: mathematical foundations}}},\ Vol.~\bibinfo
	{volume} {5}\ (\bibinfo  {publisher} {Springer},\ \bibinfo {year}
	{2015})\BibitemShut {NoStop}%
	\bibitem [{\citenamefont {Beigi}(2013)}]{beigi2013sandwiched}%
	\BibitemOpen
	\bibfield  {author} {\bibinfo {author} {\bibfnamefont {S.}~\bibnamefont
			{Beigi}},\ }\bibfield  {title} {\bibinfo {title} {Sandwiched {R}{\'e}nyi
			divergence satisfies data processing inequality},\ }\href@noop {} {\bibfield
		{journal} {\bibinfo  {journal} {Journal of Mathematical Physics}\ }\textbf
		{\bibinfo {volume} {54}} (\bibinfo {year} {2013})}\BibitemShut {NoStop}%
	\bibitem [{\citenamefont {Lin}\ \emph {et~al.}(2019)\citenamefont {Lin},
		\citenamefont {Upadhyaya},\ and\ \citenamefont
		{L{\"u}tkenhaus}}]{lin2019asymptotic}%
	\BibitemOpen
	\bibfield  {author} {\bibinfo {author} {\bibfnamefont {J.}~\bibnamefont
			{Lin}}, \bibinfo {author} {\bibfnamefont {T.}~\bibnamefont {Upadhyaya}},\
		and\ \bibinfo {author} {\bibfnamefont {N.}~\bibnamefont {L{\"u}tkenhaus}},\
	}\bibfield  {title} {\bibinfo {title} {Asymptotic security analysis of
			discrete-modulated continuous-variable quantum key distribution},\
	}\href@noop {} {\bibfield  {journal} {\bibinfo  {journal} {Physical Review
				X}\ }\textbf {\bibinfo {volume} {9}},\ \bibinfo {pages} {041064} (\bibinfo
		{year} {2019})}\BibitemShut {NoStop}%
	\bibitem [{\citenamefont {Hu}\ \emph {et~al.}(2022)\citenamefont {Hu},
		\citenamefont {Im}, \citenamefont {Lin}, \citenamefont {L{\"u}tkenhaus},\
		and\ \citenamefont {Wolkowicz}}]{hu2022robust}%
	\BibitemOpen
	\bibfield  {author} {\bibinfo {author} {\bibfnamefont {H.}~\bibnamefont
			{Hu}}, \bibinfo {author} {\bibfnamefont {J.}~\bibnamefont {Im}}, \bibinfo
		{author} {\bibfnamefont {J.}~\bibnamefont {Lin}}, \bibinfo {author}
		{\bibfnamefont {N.}~\bibnamefont {L{\"u}tkenhaus}},\ and\ \bibinfo {author}
		{\bibfnamefont {H.}~\bibnamefont {Wolkowicz}},\ }\bibfield  {title} {\bibinfo
		{title} {Robust interior point method for quantum key distribution rate
			computation},\ }\href@noop {} {\bibfield  {journal} {\bibinfo  {journal}
			{Quantum}\ }\textbf {\bibinfo {volume} {6}},\ \bibinfo {pages} {792}
		(\bibinfo {year} {2022})}\BibitemShut {NoStop}%
	\bibitem [{\citenamefont {He}\ \emph {et~al.}(2024)\citenamefont {He},
		\citenamefont {Saunderson},\ and\ \citenamefont {Fawzi}}]{he2024exploiting}%
	\BibitemOpen
	\bibfield  {author} {\bibinfo {author} {\bibfnamefont {K.}~\bibnamefont
			{He}}, \bibinfo {author} {\bibfnamefont {J.}~\bibnamefont {Saunderson}},\
		and\ \bibinfo {author} {\bibfnamefont {H.}~\bibnamefont {Fawzi}},\ }\bibfield
	{title} {\bibinfo {title} {Exploiting structure in quantum relative entropy
			programs},\ }\href@noop {} {\bibfield  {journal} {\bibinfo  {journal} {arXiv
				preprint arXiv:2407.00241}\ } (\bibinfo {year} {2024})}\BibitemShut {NoStop}%
	\bibitem [{\citenamefont {He}\ \emph {et~al.}(2025)\citenamefont {He},
		\citenamefont {Saunderson},\ and\ \citenamefont {Fawzi}}]{he2025operator}%
	\BibitemOpen
	\bibfield  {author} {\bibinfo {author} {\bibfnamefont {K.}~\bibnamefont
			{He}}, \bibinfo {author} {\bibfnamefont {J.}~\bibnamefont {Saunderson}},\
		and\ \bibinfo {author} {\bibfnamefont {H.}~\bibnamefont {Fawzi}},\ }\bibfield
	{title} {\bibinfo {title} {Operator convexity along lines, self-concordance,
			and sandwiched rényi entropies},\ }\href@noop {} {\bibfield  {journal}
		{\bibinfo  {journal} {arXiv preprint arXiv:2502.05627}\ } (\bibinfo {year}
		{2025})}\BibitemShut {NoStop}%
	\bibitem [{\citenamefont {Kamin}\ \emph {et~al.}(2025)\citenamefont {Kamin},
		\citenamefont {Burniston},\ and\ \citenamefont {Tan}}]{kaminpreparation}%
	\BibitemOpen
	\bibfield  {author} {\bibinfo {author} {\bibfnamefont {L.}~\bibnamefont
			{Kamin}}, \bibinfo {author} {\bibfnamefont {J.}~\bibnamefont {Burniston}},\
		and\ \bibinfo {author} {\bibfnamefont {E.~Y.-Z.}\ \bibnamefont {Tan}},\
	}\bibfield  {title} {\bibinfo {title} {Rényi security framework against
			coherent attacks applied to decoy-state qkd},\ }\href@noop {} {\bibfield
		{journal} {\bibinfo  {journal} {arXiv preprint arXiv:2504.12248}\ } (\bibinfo
		{year} {2025})}\BibitemShut {NoStop}%
	\bibitem [{\citenamefont {Wootters}\ and\ \citenamefont
		{Zurek}(1982)}]{wootters1982single}%
	\BibitemOpen
	\bibfield  {author} {\bibinfo {author} {\bibfnamefont {W.~K.}\ \bibnamefont
			{Wootters}}\ and\ \bibinfo {author} {\bibfnamefont {W.~H.}\ \bibnamefont
			{Zurek}},\ }\bibfield  {title} {\bibinfo {title} {A single quantum cannot be
			cloned},\ }\href@noop {} {\bibfield  {journal} {\bibinfo  {journal} {Nature}\
		}\textbf {\bibinfo {volume} {299}},\ \bibinfo {pages} {802} (\bibinfo {year}
		{1982})}\BibitemShut {NoStop}%
	\bibitem [{\citenamefont {Wang}\ and\ \citenamefont
		{L{\"u}tkenhaus}(2022)}]{wang2022numerical}%
	\BibitemOpen
	\bibfield  {author} {\bibinfo {author} {\bibfnamefont {W.}~\bibnamefont
			{Wang}}\ and\ \bibinfo {author} {\bibfnamefont {N.}~\bibnamefont
			{L{\"u}tkenhaus}},\ }\bibfield  {title} {\bibinfo {title} {Numerical security
			proof for the decoy-state bb84 protocol and measurement-device-independent
			quantum key distribution resistant against large basis misalignment},\
	}\href@noop {} {\bibfield  {journal} {\bibinfo  {journal} {Physical Review
				Research}\ }\textbf {\bibinfo {volume} {4}},\ \bibinfo {pages} {043097}
		(\bibinfo {year} {2022})}\BibitemShut {NoStop}%
	\bibitem [{\citenamefont {Tan}(2021)}]{tan2021prospects}%
	\BibitemOpen
	\bibfield  {author} {\bibinfo {author} {\bibfnamefont {E.~Y.-Z.}\
			\bibnamefont {Tan}},\ }\bibfield  {title} {\bibinfo {title} {Prospects for
			device-independent quantum key distribution},\ }\href@noop {} {\bibfield
		{journal} {\bibinfo  {journal} {arXiv preprint arXiv:2111.11769}\ } (\bibinfo
		{year} {2021})}\BibitemShut {NoStop}%
	\bibitem [{\citenamefont {Arnon-Friedman}\ and\ \citenamefont
		{Renner}(2015)}]{arnon2015finetti}%
	\BibitemOpen
	\bibfield  {author} {\bibinfo {author} {\bibfnamefont {R.}~\bibnamefont
			{Arnon-Friedman}}\ and\ \bibinfo {author} {\bibfnamefont {R.}~\bibnamefont
			{Renner}},\ }\bibfield  {title} {\bibinfo {title} {de finetti reductions for
			correlations},\ }\href@noop {} {\bibfield  {journal} {\bibinfo  {journal}
			{Journal of Mathematical Physics}\ }\textbf {\bibinfo {volume} {56}}
		(\bibinfo {year} {2015})}\BibitemShut {NoStop}%
	\bibitem [{\citenamefont {Fawzi}\ and\ \citenamefont
		{Renner}(2015)}]{fawzi2015quantum}%
	\BibitemOpen
	\bibfield  {author} {\bibinfo {author} {\bibfnamefont {O.}~\bibnamefont
			{Fawzi}}\ and\ \bibinfo {author} {\bibfnamefont {R.}~\bibnamefont {Renner}},\
	}\bibfield  {title} {\bibinfo {title} {Quantum conditional mutual information
			and approximate markov chains},\ }\href@noop {} {\bibfield  {journal}
		{\bibinfo  {journal} {Communications in Mathematical Physics}\ }\textbf
		{\bibinfo {volume} {340}},\ \bibinfo {pages} {575} (\bibinfo {year}
		{2015})}\BibitemShut {NoStop}%
	\bibitem [{\citenamefont {Kamin}\ and\ \citenamefont
		{L{\"u}tkenhaus}(2024)}]{kamin2024improved}%
	\BibitemOpen
	\bibfield  {author} {\bibinfo {author} {\bibfnamefont {L.}~\bibnamefont
			{Kamin}}\ and\ \bibinfo {author} {\bibfnamefont {N.}~\bibnamefont
			{L{\"u}tkenhaus}},\ }\bibfield  {title} {\bibinfo {title} {Improved
			decoy-state and flag-state squashing methods},\ }\href@noop {} {\bibfield
		{journal} {\bibinfo  {journal} {Physical Review Research}\ }\textbf {\bibinfo
			{volume} {6}},\ \bibinfo {pages} {043223} (\bibinfo {year}
		{2024})}\BibitemShut {NoStop}%
	\bibitem [{\citenamefont {M{\"u}ller-Lennert}\ \emph
		{et~al.}(2013)\citenamefont {M{\"u}ller-Lennert}, \citenamefont {Dupuis},
		\citenamefont {Szehr}, \citenamefont {Fehr},\ and\ \citenamefont
		{Tomamichel}}]{muller2013quantum}%
	\BibitemOpen
	\bibfield  {author} {\bibinfo {author} {\bibfnamefont {M.}~\bibnamefont
			{M{\"u}ller-Lennert}}, \bibinfo {author} {\bibfnamefont {F.}~\bibnamefont
			{Dupuis}}, \bibinfo {author} {\bibfnamefont {O.}~\bibnamefont {Szehr}},
		\bibinfo {author} {\bibfnamefont {S.}~\bibnamefont {Fehr}},\ and\ \bibinfo
		{author} {\bibfnamefont {M.}~\bibnamefont {Tomamichel}},\ }\bibfield  {title}
	{\bibinfo {title} {On quantum {R}{\'e}nyi entropies: A new generalization and
			some properties},\ }\href@noop {} {\bibfield  {journal} {\bibinfo  {journal}
			{Journal of Mathematical Physics}\ }\textbf {\bibinfo {volume} {54}}
		(\bibinfo {year} {2013})}\BibitemShut {NoStop}%
	\bibitem [{\citenamefont {Bhatia}(2013)}]{bhatia2013matrix}%
	\BibitemOpen
	\bibfield  {author} {\bibinfo {author} {\bibfnamefont {R.}~\bibnamefont
			{Bhatia}},\ }\href@noop {} {\emph {\bibinfo {title} {Matrix analysis}}},\
	Vol.\ \bibinfo {volume} {169}\ (\bibinfo  {publisher} {Springer Science \&
		Business Media},\ \bibinfo {year} {2013})\BibitemShut {NoStop}%
	\bibitem [{\citenamefont {Rubboli}\ and\ \citenamefont
		{Tomamichel}(2024)}]{rubboli2024new}%
	\BibitemOpen
	\bibfield  {author} {\bibinfo {author} {\bibfnamefont {R.}~\bibnamefont
			{Rubboli}}\ and\ \bibinfo {author} {\bibfnamefont {M.}~\bibnamefont
			{Tomamichel}},\ }\bibfield  {title} {\bibinfo {title} {New additivity
			properties of the relative entropy of entanglement and its generalizations},\
	}\href@noop {} {\bibfield  {journal} {\bibinfo  {journal} {Communications in
				Mathematical Physics}\ }\textbf {\bibinfo {volume} {405}},\ \bibinfo {pages}
		{162} (\bibinfo {year} {2024})}\BibitemShut {NoStop}%
	\bibitem [{\citenamefont {Coles}(2012)}]{coles2012unification}%
	\BibitemOpen
	\bibfield  {author} {\bibinfo {author} {\bibfnamefont {P.~J.}\ \bibnamefont
			{Coles}},\ }\bibfield  {title} {\bibinfo {title} {Unification of different
			views of decoherence and discord},\ }\href@noop {} {\bibfield  {journal}
		{\bibinfo  {journal} {Physical Review A—Atomic, Molecular, and Optical
				Physics}\ }\textbf {\bibinfo {volume} {85}},\ \bibinfo {pages} {042103}
		(\bibinfo {year} {2012})}\BibitemShut {NoStop}%
	\bibitem [{\citenamefont {Zhu}\ \emph {et~al.}(2017)\citenamefont {Zhu},
		\citenamefont {Hayashi},\ and\ \citenamefont {Chen}}]{zhu2017coherence}%
	\BibitemOpen
	\bibfield  {author} {\bibinfo {author} {\bibfnamefont {H.}~\bibnamefont
			{Zhu}}, \bibinfo {author} {\bibfnamefont {M.}~\bibnamefont {Hayashi}},\ and\
		\bibinfo {author} {\bibfnamefont {L.}~\bibnamefont {Chen}},\ }\bibfield
	{title} {\bibinfo {title} {Coherence and entanglement measures based on
			{R}{\'e}nyi relative entropies},\ }\href@noop {} {\bibfield  {journal}
		{\bibinfo  {journal} {Journal of Physics A: Mathematical and Theoretical}\
		}\textbf {\bibinfo {volume} {50}},\ \bibinfo {pages} {475303} (\bibinfo
		{year} {2017})}\BibitemShut {NoStop}%
\end{thebibliography}
\end{document}